\begin{document}
\newcommand{\R}{\mathbb R}
\newcommand{\Z}{\mathbb Z}
\newcommand{\E}{\mathbb E}
\newcommand{\1}{\mathbbm 1}
\newcommand{\N }{\mathbb N}
\newcommand{\q }{\mathbb Q}
\newcommand{\p }{\mathbb P}
\newcommand{\rr }{\mathsf R}
\newcommand{\D }{\mathcal D}
\newcommand{\F }{\mathcal F}
\newcommand{\G }{\mathcal G}
\newcommand{\h }{\mathcal H}
\newcommand{\M }{\mathcal M}
\newcommand{\eps}{\varepsilon}
\newcommand{\argmax}{\operatornamewithlimits{argmax}}
\newcommand{\argmin}{\operatornamewithlimits{argmin}}
\newcommand{\esssup}{\operatornamewithlimits{esssup}}
\newcommand{\essinf}{\operatornamewithlimits{essinf}}

\newcommand{\nc}{\newcommand}
\nc{\bg}{\begin} \nc{\e}{\end} \nc{\bi}{\begin{itemize}} \nc{\ei}{\end{itemize}} \nc{\be}{\begin{enumerate}} \nc{\ee}{\end{enumerate}} \nc{\bc}{\begin{center}} 
\nc{\ec}{\end{center}} \nc{\fn}{\footnote} \nc{\bs}{\backslash} \nc{\ul}{\underline} \nc{\ol}{\overline} 
\nc{\np}{\newpage}  \nc{\fns}{\footnotesize} 
\nc{\scs}{\scriptsize} \nc{\RA}{\Rightarrow} \nc{\ra}{\rightarrow} \nc{\bfig}{\begin{figure}} \nc{\efig}{\end{figure}} \nc{\can}{\citeasnoun} 
\nc{\vp}{\vspace} \nc{\hp}{\hspace}\nc{\LRA}{\Leftrightarrow}\nc{\LA}{\Leftarrow}\nc{\sgn}{\text{sgn}}
\renewcommand{\tilde}{\widetilde}
\nc{\eq}{\end{equation}} 

\nc{\ch}{\chapter}
\nc{\s}{\section}
\nc{\subs}{\subsection}
\nc{\subss}{\subsubsection}

\newtheorem{thm}{Theorem}[section]
\newtheorem{cor}[thm]{Corollary}
\newtheorem{lem}[thm]{Lemma}

\theoremstyle{remark}
\newtheorem{rem}[thm]{Remark}

\theoremstyle{example}
\newtheorem{ex}[thm]{Example}

\theoremstyle{ass}
\newtheorem{ass}[thm]{Assumption}

\theoremstyle{definition}
\newtheorem{df}[thm]{Definition}

\newenvironment{rcases}{
  \left.\renewcommand*\lbrace.
  \begin{cases}}
{\end{cases}\right\rbrace}

\setlength\parindent{0pt}
\pagestyle{plain}

\title{A weak law of large numbers for a limit order book model\\ with fully state dependent order dynamics}


   \author{Ulrich Horst}
   \address{Humboldt-Universit\"at zu Berlin, Germany}
   \email{horst@math.hu-berlin.de}

   \author{D\"orte Kreher}
   \address{Humboldt-Universit\"at zu Berlin, Germany}
   \email{kreher@math.hu-berlin.de}

	\begin{abstract}
This paper studies a limit order book (LOB) model,
in which the order dynamics depend on both, the current best available prices and
the current volume density functions. For the joint dynamics of the best
bid price, the best ask price, and the standing volume densities on both sides of the LOB we derive a weak law of large numbers, which states that the LOB model converges to a
continuous-time limit when the size of an individual order as well as the
tick size tend to zero and the order arrival rate tends to infinity. In
the scaling limit the two volume densities follow each a non-linear PDE
coupled with two non-linear ODEs that describe the best bid and ask price. 
	\end{abstract}
	
  \subjclass{60F05, 90B22, 91B70}
  \keywords{limit order book, market microstructure, high frequency limit, fixed point iteration}

  \thanks{This research was partially supported by CRC 649: Economic Risk. We thank Moritz Greving for his assistance with the simulation results.}

\maketitle

\renewcommand{\baselinestretch}{1.15}\normalsize
\setlength{\parskip}{5pt}

\s{Introduction}

While limit order books have extensively been discussed in the economic and econometric literature for some years (cf.~for example \cite{Biais,Easley,Glosten,Rosu}), they have only recently gained increased attention by researchers in mathematical finance. One research objective is to specify a realistic discrete dynamics of a LOB which can be approximated by an analytically tractable continuous time model. This is achieved by introducing scaling parameters and passing to the high frequency limit, when the number of submitted orders gets large while the individual order size and the tick size tend to zero. 
Depending on the scaling assumptions the high frequency limit will either be deterministic as in a law of large numbers or be of (jump) diffusion type as in a functional central limit theorem. 

Deterministic high frequency limits for LOB models were derived by \cite{HP} and \cite{Gao}. In \cite{HP} a weak law of large numbers is established for a limit order book model with Markovian dynamics depending on prices only. 
In \cite{Gao} the authors study a limit order book model, similar to ours but without any feedback effect, and derive a deterministic ODE limit using weak convergence in the space of positive measures on a compact interval. 
A diffusion limit for order book dynamics can be found in \cite{Cont1}, where the top of the book is analyzed. The result was later generalized in \cite{Cont2}. In \cite{Lakner1} a high frequency limit for a one-sided limit order book model is derived under the assumption that on average investors place their limit orders above the current best ask price. The opposite case when orders are placed in the spread with higher probability is analyzed in \cite{Lakner2}, where the authors use a coupling between a simple one-sided limit order book model and a branching random walk to characterize the diffusion limit, cf.~also \cite{Simatos}. In the recent preprint \cite{Guo} the limiting behaviour of an individual order position together with the best bid and best ask queue is studied and fluctuations around their fluid limits are derived.

There is considerable empirical evidence (see, e.g. \cite{Biais, Cebiroglu-Horst, Hautsch-Huang} and references therein) that the state of the order book, especially order imbalance at the top of the book, has a noticeable impact on order dynamics. However, in the literature the order flow in most limit order book models either follows independent Poisson dynamics or depends on the price process only as in \cite{HP}. Exceptions to this are \cite{Abergel2}, where Hawkes-type dynamics are used, as well as \cite{Rosenbaum1} and the very recent preprint \cite{Rosenbaum2}, in which the ergodicity of a general Markovian order book model is studied and the diffusivity of the rescaled price process in this general framework is derived.

In this paper we adapt the model from \cite{HP} but use a different approach which allows us to deal with much more general, in fact fully state dependent Markovian order flow dynamics:
the type of order (market order, limit order placement, cancellation), its size, and the price level at which the order is submitted can all depend on the current state of the limit order book, i.e.~on prices as well as on standing volumes. This is different from \cite{BHQ}, where the standing volume only influences the price dynamics, but there is no direct feedback to the order flow. As a result, unlike in \cite{HP}, the price process cannot be analyzed separately. Instead, we have to establish joint convergence of prices and volumes. The resulting scaling limit for a fully state dependent Markovian order book dynamics is the main result of this paper. 

Our main theorem states that when the number of submitted orders goes to infinity over a fixed time horizon, while the proportion of active orders, the tick size, and the individual order size tend to zero, the dynamics of the prices and the volume density functions converge to the unique solution of a non-linear coupled ODE/PDE system. To prove our main result we first construct a deterministic discrete non-linear approximation $\tilde{S}^{(n)}$ to the random discrete order book dynamics $S^{(n)}$. This is done using a weak law of large numbers for triangular martingale difference arrays as in \cite{HP}, even though our method of approximation is different and more elegant, which allows us to handle this more general setting. In the next step we then construct an iteration towards the deterministic approximation for fixed $n$, denoted $\tilde{S}^{(n),m}$, and we prove that it approximates $\tilde{S}^{(n)}$ almost uniformly. Afterwards it is shown that each iteration step in the prelimit converges as $n$ goes to infinity to a continuous model $\hat{S}^m$ solving a certain differential equation. Indeed, these models can be seen to be a fixed point iteration generated by a contraction mapping. The fixed point then gives a solution to our limiting coupled ODE/PDE system. 

For the ease of notation we have chosen to analyze only the buy side of the order book together with the bid price in most parts of this paper. However, if one defines the sell side and ask price in an analogous way, the result can easily be extended to a two-sided order book with order dynamics depending on the {\it whole} limit order book, i.e.~on bid and ask prices as well as the order volumes of both sides of the book. The corresponding result for the two-sided LOB model can be found in the final section of the paper. Especially, making the distribution of order types depending on the bid-ask spread will ensure that the bid and ask price do not cross, cf.~also \cite{HP}. Moreover, we assume that order arrival times are deterministic. However, one can easily generalize our main result allowing for randomly spaced arrival times by making use of the time change theorem as has been done in \cite{BHQ,HP}.

The remainder of this paper is organized as follows: In Section \ref{Setup} we define the dynamics of a sequence of one-sided discrete limit order book models, state our assumptions and the main result. We also give an example which satisfies all our assumptions. Section \ref{limitmodel} is devoted to the analysis of the limiting coupled PDE/ODE system, while Section \ref{proof} contains the convergence proof of the discrete order book models to the high frequency limit. Finally, in Section \ref{2} we state our main theorem for the two-sided limit order book model and conduct a simulation study, which shows how the state dependency can be used to equilibrate the buy and sell side volumes from an initially highly imbalanced volume distribution.   

\s{Setup and main result}\label{Setup}

In this section we define for every $n\in\N$ a model for the dynamics of a one-sided limit order book with tick size $\Delta x^{(n)}$. Later we consider the scaling limit of these models when the tick size and the impact of a single order tend to zero, while the number of order placements and order cancelations over a given time horizon $[0,T]$ tends to infinity. Our modeling framework closely follows \cite{HP} but we allow for a much more general dependence of order arrivals on the state of the book. Throughout, all random variables are defined on a common complete probability space $(\Omega,\F,\p)$.

\subs{The model}

The dynamics of the order book in the $n$-th model is described by a c\`adl\`ag stochastic process $S^{(n)}=\left(S^{(n)}(t)\right)_{0\leq t\leq T}$ taking values in the Hilbert space
\[E:=\R\times L^2(\mathbb{R}),\qquad \left\Vert \alpha\right\Vert_E:=\left|\alpha_1\right|+\left\Vert \alpha_2\right\Vert_{L^2}.\]
The state of the book changes due to arriving market and limit orders. In the $n$-th model there are $\left\lfloor T/\Delta t^{(n)}\right\rfloor$ such {\sl events} taking place at times
\[t_k^{(n)}:=k\Delta t^{(n)},\quad k=1,\dots,\left\lfloor\frac{T}{\Delta t^{(n)}}\right\rfloor,\]
where $\Delta t^{(n)}$ denotes a scaling parameter converging to zero as $n\ra\infty$ and $t_0^{(n)}=0$. 

The state of the book after $k$ events is denoted $S^{(n)}_k$. We put
\[S^{(n)}(t):=S^{(n)}_k:=\left(B_k^{(n)},v_k^{(n)}\right)\quad\text{for} \quad t\in\left[t_k^{(n)},t_{k+1}^{(n)}\right),\]
where $B^{(n)}_k$ and $v_k^{(n)}$ denote the best bid price and the buy side volume density function {\it relative} to the best bid price ({\sl relative volume density function}), respectively. To be precise, defining $x^{(n)}_j:=j\Delta x^{(n)}$ for $j\in\Z,\ n\in\N$,
\[\int_{x_{-j-1}^{(n)}}^{x_{-j}^{(n)}}v_{k}^{(n)}(x)dx\] 
represents the liquidity available for buying at a price which is $j\in\N_0$ ticks below the best bid price at time $t^{(n)}_k$. 
In order to model placements of limit orders inside the spread, the function $v^{(n)}_k,\ k\in\N,$ will be defined on the whole real line. We refer to the buy volumes standing at positive distance from the best bid price as the {\sl shadow book}, cf.~Figure \ref{shadow}. 

\begin{figure}[h]
\includegraphics[width=0.55\textwidth]{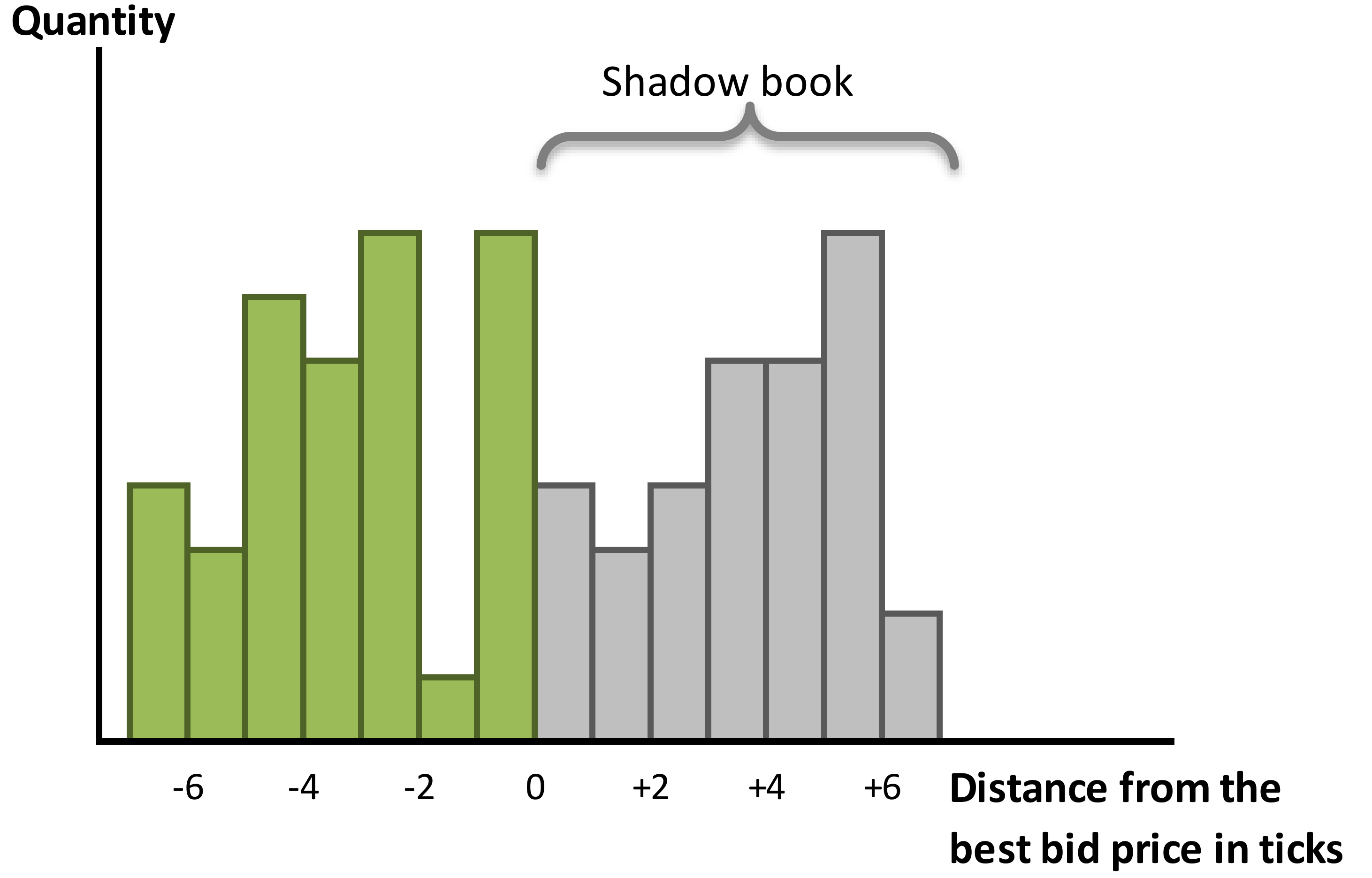}
\caption{The shadow book extends the relative volume density function to the right}
\label{shadow}
\end{figure}

The idea of the shadow book is taken from \cite{HP}. The shadow book has to be understood as a tool to model the (conditional) distribution of the size of limit order placements inside the spread in such a way, that those placements extend the current volume density function of the {\sl visible book} in a sufficiently ``smooth'' way to the right. The shadow book follows the same dynamics as the volumes of the visible book and becomes part of the visible book through price changes. The working of the shadow book and its interaction with the visible book will be further explained below; cf. Example \ref{ex-shadow-book}.

At time $t=0$ the state of the limit order book is deterministic for all $n\in\N$ and denoted by 
\[s_0^{(n)}=\left(B_0^{(n)},v^{(n)}_{0}\right) \in \mathbb{R} \times L^2(\R).\]
To state the convergence condition on the sequence of initial states we introduce for each $n\in\N$ the translation operators $T_+^{(n)}$ and $T_-^{(n)}$, which act on functions $f:\R\ra\R$ in the following way:
\[T_+^{(n)}(f)(\cdot):=f\left(\cdot+\Delta x^{(n)}\right),\qquad T_-^{(n)}(f)(\cdot):=f\left(\cdot-\Delta x^{(n)}\right).\]
Note that the translation operator is isometric, i.e. for all $f\in L^2$,
\[\left\Vert T^{(n)}_+(f)\right\Vert_{L^2}=\left\Vert f \right\Vert_{L^2}.\]

Furthermore, let us fix some constant $M>0$ throughout.

\begin{ass}\label{initial} 
The initial volume function $v^{(n)}_{0}$ is a non-negative step-function on the grid \mbox{$\{x_j^{(n)},\ j\in\Z\}$,} which is uniformly bounded by $M$ and has compact support in $[-M,M]$ for all $n\in\N$. Moreover, 
there exists a non-negative continuously differentiable function $v_{0}\in L^2$ such that
\[\left\Vert v_{0}^{(n)}-v_{0}\right\Vert_{ L^2}=\mathcal{O}\left(\Delta x^{(n)}\right).\]
Also there exists $B_0\in\R_+$ such that $B^{(n)}_0\ra B_0.$ We denote $s_0:=\left(B_0,v_{0}\right)\in E$.
\end{ass}

\begin{rem}\label{initialshift}
Note that Assumption \ref{initial} implies that $v_0$ also has compact support in $[-M,M]$ and therefore, because $v_0\in C^1(\R)$, its derivative must be bounded by some $C>0$. Thus,
\begin{eqnarray*}
\left\Vert\left(T_{+}^{(n)}-I\right)\left(v_{0}^{(n)}\right)\right\Vert_{L^2}
&\leq& \left\Vert \left(T_{+}^{(n)}\right)\left(v_0^{(n)}-v_{0}\right)\right\Vert_{L^2}+\left\Vert\left(T_{+}^{(n)}-I\right)\left(v_{0}\right)\right\Vert_{L^2}+\left\Vert v_0-v_{0}^{(n)}\right\Vert_{L^2}\\
&\leq&\mathcal{O}\left(\Delta x^{(n)}\right)+\sup_{x\in\R}\left|v_0'(x)\right|\left\Vert\1_{[-M,M]}\Delta x^{(n)}\right\Vert_{L^2}=\mathcal{O}\left(\Delta x^{(n)}\right).
\end{eqnarray*}
\end{rem}

In our model there are three events that change the state of our limit order book. The (buy side) limit order book changes if:
\bi
\item (A): a market sell order of size equal to the current best bid queue arrives. In this case the best bid price decreases by one tick. Hence, the relative volume density function shifts one tick to the right.
\item (B): a buy limit order is placed inside the spread one tick above the current best bid price. In this case the best bid price increases by one tick and the relative volume density function shifts one tick to the left.
\item (C): a buy limit order placement of size $\frac{\Delta v^{(n)}}{\Delta x^{(n)}}\omega_k^{(n)}$ at price level $\rho_k^{(n)}$ occurs. If $\omega_k^{(n)}<0$, this corresponds to a cancelation of volume.
\ei

Here $\Delta v^{(n)}$ is a scaling parameter that determines the size of an individual placement / cancelation. We refer to market orders and limit buy order placements in the spread (Types A,B) as {\sl active orders}. They lead to price changes. Cancelations and limit order placements (Type C) do not lead to price changes. They are referred to as {\sl passive orders}. The assumption that market orders match precisely against the volume standing at the top of the book and hence shift prices by exactly one tick is made for convenience and shows that our framework is flexible enough to allow for larger market orders. However, it is not unrealisitc: in an empirical study the authors of \cite{Farmer} found that in their data sample around 85\% of the sell market orders which lead to price changes match exactly the size of the volume standing at the best bid price. 
The effect of a market order that does not lead to a price change is equivalent to a cancelation of standing volume. 

Event types are determined by a field of random variables $\left(\phi_k^{(n)}\right)_{k,n\in\N}$ taking values in the set $\left\{A,B,C\right\}$. The size and the price level at which an order placement resp.~cancelation takes place are determined by a field of random variables $\left(\omega_k^{(n)},\rho_k^{(n)}\right)_{k,n\in\N_0}$ according to the following assumption. 

\begin{ass}\label{density}
There exists a field of random variables $\left(\pi_k^{(n)}\right)_{k,n\in\N_0}$ taking values in the compact interval $[-M,M]$ almost surely and
\[\rho_k^{(n)}:=B_k^{(n)}+j\Delta x^{(n)}\quad\text{for}\quad\pi_k^{(n)}\in\left[x_{j-1}^{(n)},x_{j}^{(n)}\right).\]
Furthermore, there exists a field of random variables $\left(\omega_k^{(n)}\right)_{k,n\in\N_0}$ such that $\omega_k^{(n)}\in[-M,M]$ for all $k,n\in\N_0$.
\end{ass}

The random variables $\pi_k^{(n)},\ k,n\in\N_0,$ determine the placement/cancelation price levels relative to the best bid price. If $\rho_k^{(n)}= B_k^{(n)}$, then the placement/cancelation takes place at the best bid price; if $\rho_k^{(n)}< B_k^{(n)}$, then it takes place deeper in the book; else it takes place in the shadow book. The shadow book interacts with the visible book through price changes which shift the relative volume density functions $v^{(n)}$. The following example illustrates the working of the shadow book. 

\begin{ex}\label{ex-shadow-book}
Suppose that the $k$th event is a limit order placement one tick above the best bid price into the shadow book, i.e.
\[
	\phi_k^{(n)} = C,\quad \rho_k^{(n)}=B_k^{(n)}+\Delta x^{(n)} \quad \mbox{and} \quad \omega_k^{(n)}>0.
\] 
Further suppose that the $(k+1)$st event is a buy limit order placement in the spread, i.e.~\mbox{$\phi_{k+1}^{(n)}=B$.} Then,
\[
	B^{(n)}_{k+2} = B^{(n)}_{k+1} + \Delta x^{(n)} = B^{(n)}_{k} + \Delta x^{(n)}
\] 
and for all $x\in\left[-\Delta x^{(n)},0\right)$ corresponding to standing volumes at the current best bid price,
\[
	v^{(n)}_{k+2}(x) = v^{(n)}_{k+1}\left(x+\Delta x^{(n)}\right) = v^{(n)}_{k} \left(x+\Delta x^{(n)}\right)+ \frac{\Delta v^{(n)}}{\Delta x^{(n)}}\omega^{(n)}_k,
\]
while for all $x\notin\left[-\Delta x^{(n)},0\right)$,
\[v^{(n)}_{k+2}(x) = v^{(n)}_{k+1}\left(x+\Delta x^{(n)}\right) = v^{(n)}_{k} \left(x+\Delta x^{(n)}\right).\]
\end{ex}

Note that in general $\omega_k^{(n)}$ is (even conditionally) dependent on $\pi_k^{(n)}$, if one wants to avoid negatives volumes due to cancelations. 
The main contribution of this paper is that the conditional distribution of the random variables $\left(\phi_k^{(n)},\omega_k^{(n)}, \pi_k^{(n)}\right)_{k,n\in\N_0}$ may depend on both, current prices and volumes. This extends \cite{HP} where prices are independent of volumes as well as \cite{BHQ} where only the distributions of price increments depend on volumes. 

 For every $n\in\N_0$ and $k=0,1,\dots,\lfloor T/\Delta t^{(n)}\rfloor$ we define the $\sigma$-field $\F_k^{(n)}:=\sigma\left(S_j^{(n)},\ j\leq k\right)$.
We will assume that for each $n\in\N$ the state process $S^{(n)}$ is a Markov process in its own filtration, cf.~Assumption \ref{Lipschitz} below. 
 
To formulate the next assumption we need to introduce the space $E':=\{s=(B,v)\in E:\ v\in C^1\}.$

\begin{ass}\label{Lipschitz}$\quad$
\begin{enumerate}
\item There are two Lipschitz continuous functions $p^A,p^B:E\ra[0,1]$ with Lipschitz constant $L$ and a scaling parameter $\Delta p^{(n)}$ such that for all $n\in\N_0$ and $k\leq \lfloor T/\Delta t^{(n)}\rfloor$,
\begin{eqnarray*}
\p\left(\left.\phi^{(n)}_k=I \ \right|\F_k^{(n)}\right)=\Delta p^{(n)} p^I\left[S_k^{(n)}\right] \quad a.s.\quad\text{for}\ I=A,B.
\end{eqnarray*}
\item There are Lipschitz continuous functions $f^{(n)}:E\ra L^2,\ n\in\N_0,$ with common Lipschitz constant $L>0$ such that for all $k\leq \lfloor T/\Delta t^{(n)}\rfloor$,
\[
	f^{(n)}\left[S_k^{(n)}\right](\cdot)=\frac{1}{\Delta x^{(n)}}\E\left(\left. \omega_k^{(n)}\sum_{j\in\Z} \1_{\left\{\pi_k^{(n)}\in\left[x_j^{(n)},x^{(n)}_{j+1}\right)\right\}}(\cdot)\1_C\left(\phi_k^{(n)}\right)\right|\F_k^{(n)}\right)\quad a.s.
	\]
and
\[\sup_{s\in E}\left\Vert f^{(n)}[s](\cdot)\right\Vert_\infty\leq M.\]
Moreover, there exists a function $f:E\ra L^2$ such that
\[\sup_{s\in E}\left\Vert f^{(n)}[s]-f[s]\right\Vert_{L^2}=\mathcal{O}\left(\Delta x^{(n)}\right),\]
where $f[s](\cdot):\R\ra[-M,M]$ is continuously differentiable in $x$ for all $s\in E'$ with derivate being uniformly bounded in absolute value by $M$.
\end{enumerate}
\end{ass}

\begin{rem}\label{shiftf}
Note that Assumption \ref{Lipschitz} implies that
\[\sup_{s\in E}\left\Vert\left(T_+^{(n)}-I\right)\left(f^{(n)}[s]\right)\right\Vert_{L^2}=\mathcal{O}\left(\Delta x^{(n)}\right).\]
\end{rem}

The following example illustrates how our modeling framework allows for a dependence on the price dynamics and standing volumes. 

\begin{ex}\label{Ex1}
Given any $h\in L^2$ define 
\[H^{(n)}_{k}:=\int_{\R_+}v^{(n)}_{k}(x)h(x)dx.\]
We may interpret $H^{(n)}_k$ as an indicator for the volume standing at the top of or deeper into the book, depending on the choice of $h$.   
Set
\[
	p^I\left[S_k^{(n)}\right]:=g^I\left(B_k^{(n)},H^{(n)}_{k}\right), \quad I=A,B,
\]
for Lipschitz continuous functions $g^I:\R^2\ra[0,1]$. Then by the Cauchy-Schwarz inequality there exists $L<\infty$ such that 
\begin{eqnarray*}
&&\left|p^{I}\left[S_k^{(n)}\right]-p^{I}\left[\tilde{S}_k^{(n)}\right]\right|\leq L \left\Vert S_k^{(n)}-\tilde{S}_k^{(n)}\right\Vert_E.
\end{eqnarray*}
Let us further assume that for all $n\in\N_0,\ k\leq \lfloor T/\Delta t^{(n)}\rfloor$, $I=A,B$,
\begin{eqnarray*}
	\p\left(\left.\phi^{(n)}_k=I\ \right|S_j^{(n)},\ j\leq k\right) & = & \Delta p^{(n)} p^I\left[S_k^{(n)}\right]\quad a.s.
\end{eqnarray*}
and that there exists a function $m:\R^2\ra L^2$ such that for all $k,n$ as above
\[
	\p\left(\left.\pi_k^{(n)}\in dx\ \right|S_j^{(n)},\ j\leq k\right)=m\left[ B_k^{(n)},H^{(n)}_{k}\right](x)dx,
\]
where for each $(b,h)\in\R^2$ the function $m[b,h](\cdot)$ is uniformly bounded with bounded support in $[-M,M]$ and the mapping $(b,h,x) \mapsto m[ b,h] (x)$ is continuously differentiable with bounded derivatives. 
Moreover, suppose that there is a field $\left(\tilde{\omega}_k^{(n)}\right)$ of i.i.d.~random variables with bounded density which has compact support in $[-M,M]$ and set
\[\omega_k^{(n)}=\tilde{\omega}_k^{(n)}\wedge\left(- v^{(n)}\left(\pi_k^{(n)}\right)+\eps\right)\qquad\text{for some }\eps>0.\]
If the random variables $\phi^{(n)}_k, \pi^{(n)}_k,$ and $\tilde{\omega}_k^{(n)}$ are conditionally on $\{S_j^{(n)}, j\leq k\}$ independent and if $\tilde{\omega}_k^{(n)}$ is also independent of $\{S_j^{(n)}, j\leq k\}$, then Assumption \ref{Lipschitz} is satisfied.
\end{ex}

\subs{Main result}

We are now ready to define the full dynamics of the order book. For notational convenience we define for $I\in\left\{A,B,C\right\}$ and $k,n\in\N$, the event indicator variable
\[
	\1^{(n),I}_k:=\1_I\left(\phi_k^{(n)}\right)
\]
and introduce the short-hand notation ($I=A,B$):
\[p^{(n),I}[\cdot]:=\Delta p^{(n)}p^I[\cdot], \quad p^{(n),B-A}:=p^{(n),B}-p^{(n),A},\quad p^{B-A}:=p^B-p^A,
\quad \1^{(n),B-A}_k:=\1^{(n),B}_k-\1^{(n),A}_k.
\]

\begin{df}
For each $n\in\N$ the dynamics of the state process $S^{(n)}=\left(B^{(n)},v^{(n)}\right)$ is given by $S_0^{(n)}:=s_0^{(n)}$ and for $k=1,\dots,\left\lfloor\frac{T}{\Delta t^{(n)}}\right\rfloor$,
\begin{equation}
\begin{split}
	B_{k}^{(n)} &= B^{(n)}_{k-1}+\Delta x^{(n)}\1_{k-1}^{(n),B-A} \\
	v_{k}^{(n)} &= v_{k-1}^{(n)}+\left(T_-^{(n)}-I\right)\left(v_{k-1}^{(n)}\right)\1_{k-1}^{(n),A}+\left(T_+^{(n)}-I\right)\left(v_{k-1}^{(n)}\right)\1_{k-1}^{(n),B} +\Delta v^{(n)} M_{k-1}^{(n)},
\end{split}
\end{equation}
where 
\[M_{k}^{(n)}(\cdot):=\1_{k}^{(n),C}\frac{\omega_k^{(n)}}{\Delta x^{(n)}}\sum_{j\in\Z} \1_{\left\{\pi_k^{(n)}\in\left[x_j^{(n)},x^{(n)}_{j+1}\right)\right\}}(\cdot).\]
\end{df}

To derive a law of large numbers we need to make the right assumptions on the scaling parameters. Our choice of scaling introduces two time scales, a fast one for limit order arrivals and cancelations and a comparably slow one for market order arrivals and limit order placements in the spread. 

\begin{ass} \label{scaling} 
There exist constants $c_0,c_1,c_2>0$ and $\beta\in(0,1)$ such that 
\[\lim_{n\ra\infty}\frac{\Delta x^{(n)}\Delta p^{(n)}}{\Delta t^{(n)}}=c_0,\quad\lim_{n\ra\infty}\frac{\Delta v^{(n)}}{\Delta t^{(n)}}=c_1,\quad \lim_{n\ra\infty}\frac{\Delta x^{(n)}}{\left(\Delta t^{(n)}\right)^\beta}=c_2.\]
W.l.o.g.~we assume that $c_0=c_1=c_2=1$ in the following.
\end{ass}

\begin{rem}
While it is very natural to assume that $\Delta v^{(n)}\sim \Delta t^{(n)}$ for $n\ra\infty$ in order to keep the total volume of orders in the limit order book of constant size, the assumption $\Delta x^{(n)}\Delta p^{(n)}\sim \Delta t^{(n)}$ is not so standard. However, note that this constitutes the critical (and interesting) case. Indeed, as can be easily seen from the proof of our main theorem, assuming that $\Delta x^{(n)}\Delta p^{(n)}=o(\Delta t^{(n)})$ would lead to a constant price in the high frequency limit. Such a result can be found in \cite{Gao}.
\end{rem}

The following weak law of large numbers is the main result of this paper. It states that the state process converges in probability to a deterministic limit that can be described as the solution of a system of non-linear differential equations subject to an initial boundary condition.

\begin{thm}\label{limit}
Under Assumptions \ref{initial}, \ref{density}, \ref{Lipschitz}, and \ref{scaling} there exists a deterministic process $S:[0,T]\ra E$ such that for all $\eps>0$,
\[\lim_{n\ra\infty}\p\left(\sup_{0\leq t\leq T}\left\Vert S^{(n)}(t)-S(t)\right\Vert_E>\eps\right)=0. \]
The function $S=(B,v)$ is the unique classical solution to the following coupled ODE/PDE initial boundary value problem:
\begin{equation}\label{system}
\begin{aligned}
S(0)&=s_0,\\
dB(t)&=p^{B-A}[S(t)]dt,\quad t\in[0,T],\\
v_t(t,x)&=p^{B-A}[S(t)]v_x(t,x)+f[S(t)](x),\quad (t,x)\in[0,T]\times\R.
\end{aligned}
\end{equation}
\end{thm}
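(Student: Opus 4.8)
The plan is to separate the two difficulties — the randomness and the nonlinear coupling between the price and the volume density — and to treat them in turn, working throughout in the norm $\|\cdot\|_E$ and aiming for control that is uniform in time. First I would remove the noise by a martingale weak law of large numbers, reducing $S^{(n)}$ to a deterministic recursion; then I would identify the deterministic limit as the unique fixed point of a contraction built from the Lipschitz data in Assumption \ref{Lipschitz}, and transfer the discrete recursion to this fixed point by a Picard-type iteration.

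For the first step I would write the one-step increment $S_k^{(n)}-S_{k-1}^{(n)}$ as the sum of its $\F_{k-1}^{(n)}$-conditional expectation and a martingale difference $\Delta\mathcal N_{k-1}^{(n)}$. By Assumption \ref{Lipschitz}(1) the conditional mean of the price increment is $\Delta x^{(n)}\Delta p^{(n)}p^{B-A}[S_{k-1}^{(n)}]$; the conditional mean of the shift term is $\Delta p^{(n)}\{p^A(T_-^{(n)}-I)+p^B(T_+^{(n)}-I)\}(v_{k-1}^{(n)})$, which, replacing $(T_\pm^{(n)}-I)/\Delta x^{(n)}$ by $\pm\partial_x$ (legitimate only against a smooth reference, so the rigorous comparison is deferred to the iteration below, at cost $O(\Delta x^{(n)})$ by Remarks \ref{initialshift} and \ref{shiftf}) matches the transport term; and by Assumption \ref{Lipschitz}(2) the conditional mean of the placement term is $\Delta v^{(n)}f^{(n)}[S_{k-1}^{(n)}]$. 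Under Assumption \ref{scaling} all three are of order $\Delta t^{(n)}$, so the accumulated conditional means form a Riemann sum for the right-hand side of \eqref{system}. It then remains to show $\sup_k\|\sum_{j<k}\Delta\mathcal N_j^{(n)}\|_E\to 0$ in probability via Doob's inequality and a vanishing quadratic-variation estimate. The price fluctuations contribute $O((\Delta t^{(n)})^\beta)$ and the placement fluctuations $O((\Delta t^{(n)})^{1-\beta})$, both vanishing for $\beta\in(0,1)$; the genuinely delicate contribution is that of the relative-frame shifts $(T_\pm^{(n)}-I)(v_{k-1}^{(n)})$, whose size I cannot control a priori because limit-order placements roughen the density. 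To sidestep this I would pass to absolute price coordinates $V_k^{(n)}(x):=v_k^{(n)}(x-B_k^{(n)})$: a price-changing event is then a pure change of reference frame and leaves $V^{(n)}$ unchanged, so the only source of noise in the volume is the placement martingale, whose quadratic variation vanishes for every $\beta\in(0,1)$. Transforming back via $v(t,y)=V(t,y+B(t))$ reproduces exactly the transport term $p^{B-A}[S(t)]v_y$ in \eqref{system}, now as a deterministic consequence of the moving frame rather than as a martingale to be estimated.

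For the nonlinear limit I would set up the map $\Phi$ sending a trajectory $\sigma=(\beta,u)$ to the solution of \eqref{system} with the coefficients $p^{B-A}[\cdot]$ and $f[\cdot]$ frozen at $\sigma$. The frozen volume equation is a linear transport equation with velocity $-p^{B-A}[\sigma(\cdot)]$, solvable explicitly by characteristics, and its solution inherits the $C^1$-regularity and the uniform bounds from $v_0$ and $f$; hence $\Phi$ preserves a space of trajectories carrying uniform $C^1$ and support bounds. The Lipschitz estimates in Assumption \ref{Lipschitz} together with a Gronwall argument make $\Phi$ a contraction, so its unique fixed point is the unique classical solution $S$ of \eqref{system}, and the iterates $\hat S^m:=\Phi^m(s_0)$ converge to it. In parallel I would run the analogous linear recursion $\tilde S^{(n),m}$ at the discrete level, with velocity and source frozen at $\tilde S^{(n),m-1}$, and show $\tilde S^{(n),m}\to\hat S^m$ as $n\to\infty$ for each fixed $m$: because the velocity is frozen and $\hat S^m$ is smooth, the consistency error of the scheme is governed by difference quotients of the smooth $\hat v^m$ and is $O(\Delta x^{(n)})$, while stability follows from the isometry of $T_\pm^{(n)}$ in $L^2$, so a discrete Gronwall closes the estimate. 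Finally I would assemble $\|S^{(n)}-S\|_E\le\|S^{(n)}-\tilde S^{(n)}\|_E+\|\tilde S^{(n)}-\tilde S^{(n),m}\|_E+\|\tilde S^{(n),m}-\hat S^m\|_E+\|\hat S^m-S\|_E$, choosing $m$ large to make the second and fourth terms small uniformly in $n$ (both iterations are contractions) and then letting $n\to\infty$ to kill the first (the WLLN) and third terms.

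The main obstacle is precisely the transport term $p^{B-A}[S]\,v_x$, which couples the velocity to the spatial derivative of the unknown density while the discrete approximations are only step functions carrying no usable derivative. It enters twice: once in the stochastic reduction, where the relative-frame shifts would otherwise force an a priori regularity bound on the rough random density (circumvented by the change to absolute coordinates), and once in the deterministic passage to the limit, where a direct Gronwall estimate fails for the same reason (circumvented by the linearizing iteration, which lets the discretization error be measured against the smooth continuous iterate). Making this work requires showing that a function space carrying uniform $C^1$ and support bounds is preserved by $\Phi$ and that $\Phi$ contracts there with constants uniform in $n$; this simultaneous control at the discrete and continuous levels is the delicate technical heart of the argument.
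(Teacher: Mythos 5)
Your proposal is correct and follows essentially the same route as the paper: the martingale WLLN applied after rewriting the volume in accumulated-shift (absolute) coordinates so that only the placement term carries noise, a Banach fixed-point construction of the limit system, a parallel frozen-coefficient discrete iteration $\tilde S^{(n),m}$ converging to the continuous iterate $\hat S^m$ for each fixed $m$, and the final four-term triangle-inequality assembly. The only cosmetic difference is that the paper splits your first step into two stages (compensating with the true state $S^{(n)}$ first, then passing to the self-consistent deterministic recursion $\tilde S^{(n)}$ via a discrete Gronwall argument), and packages the shift estimates in a counting lemma for iterated translations.
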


The rest of the paper is devoted to the proof of Theorem \ref{limit}. In the next section, we first show with the help of a fixed point argument that the ODE/PDE system (\ref{system}) does indeed have a unique solution. Section \ref{proof} contains the proof of convergence of the discrete models to the scaling limit. The difficulty in proving Theorem \ref{limit} comes from the non-local dependence of the coefficients on the whole function in (\ref{system}).

\begin{ex}
Let us choose $\phi_k^{(n)},\pi_k^{(n)},$ and $\tilde{\omega}_k^{(n)}$ as in Example \ref{Ex1} and set for some $\tilde{M}>0$,
\[\omega_k^{(n)}:=\left(\tilde{\omega}_k^{(n)}\right)^+-\frac{1}{\tilde{M}+1}\left(\tilde{\omega}_k^{(n)}\right)^-\left[v_k^{(n)}\left(\pi_k^{(n)}\right)\wedge \tilde{M}\right].\]
Then, choosing $\tilde{M}$ large enough, the ODE/PDE system (\ref{system}) takes the special form 
\begin{equation*}
\begin{aligned}
S(0)&=s_0,\\
dB(t)&=p^{B-A}[B(t),H(t)]dt,\quad t\in[0,T],\\
v_t(t,x)&=p^{B-A}[B(t),H(t)]v_x(t,x)+f_1[B(t),H(t)](x)-f_2[B(t),H(t)](x)v(t,x),\quad (t,x)\in[0,T]\times\R.
\end{aligned}
\end{equation*}
\end{ex}

Throughout the paper we will denote by $C>0$ a generic constant that may vary from line to line and is independent of any index involved.

\s{The limit model}\label{limitmodel}

In this section we prove existence and uniqueness of a solution to (\ref{system}). First, we explicitly construct a solution via a fixed point iteration on a suitable Banach space to prove existence. Under the assumptions of Theorem \ref{limit} there exists a constant $\ol{K}<\infty$ such that 
\[\left\Vert \E S^{(n)}_{k}\right\Vert_E\leq B_0^{(n)}+ T\cdot\frac{\Delta x^{(n)}\Delta p^{(n)}}{\Delta t^{(n)}}+\left\Vert v_{0}^{(n)}\right\Vert_{L^2}+T\cdot\frac{\Delta v^{(n)}}{\Delta t^{(n)}}\cdot\sup_{s\in E}\left\Vert f^{(n)}[s]\right\Vert_{L^2}\leq \ol{K}\]
for all $k,n\in\N$. We thus choose as our Banach space the space $\tilde{E}$ of functions $g:\ [0,T]\ra E$ which satisfy
\[\sup_{t\in[0,T]}\left\Vert g(t)\right\Vert_E\leq \ol{K}\]
equipped with the norm $\sup_{t\in[0,T]}\left\Vert g(t)\right\Vert_E$. Uniqueness will be shown using a standard Gronwall argument.

\subs{Fixed point iteration in the scaling limit}\label{fixc}

To construct a solution to (\ref{system}) we perform a fixed point iteration for the function $F: \tilde{E}\ra\tilde{E}$ defined through $F: g\mapsto G$, where $G=(G^1,G^2)$ and $G^1: [0,T]\ra\R$ and $G^2: [0,T]\ra L^2$ are given by
\begin{eqnarray*}
G^1(t) &=& B_0+\int_0^tp^{B-A}\left[g(s)\right]ds,\\
G^2(t,x) &=& v_{0}\left(x+\int_0^tp^{B-A}\left[g(s)\right]ds\right)+\int_0^t f\left[g(s)\right]\left(x+\int_s^tp^{B-A}\left[g(u)\right]du\right)ds.
\end{eqnarray*}

Note that by definition and Assumptions \ref{initial}, \ref{density}, and \ref{Lipschitz}, $G^2(t,\cdot)$ has support in the compact interval $[-M-T,M+T]$ for all $t\in[0,T]$.

We define $\tilde{E}':=\{g\in\tilde{E}\ |\ g:[0,T]\ra E'\}$.

\begin{lem}\label{dif}
For fixed $g\in\tilde{E}'$ the function $G=(G^1,G^2)$ satisfies
\begin{eqnarray*}
G(0)&=&s_0,\\
dG^1(t)&=&p^{B-A}\left[g(t)\right]dt,\\
G^2_t(t,x)&=&p^{B-A}[g(t)]G^2_x(t,x)+f[g(t)](x)\quad \forall\ (t,x)\in[0,T]\times\R.
\end{eqnarray*}
Moreover, in this case $G \in\tilde{E}'$ and there exist two constants $\ol{J},\ol{L}<\infty$, which do not depend on $g$, such that
\[\left|G^2_x(t,x)\right|\leq\ol{J},\quad \left|G^2_t(t,x)\right|\leq\ol{L}\quad\forall\ (t,x)\in[0,T]\times\R.\]
\end{lem}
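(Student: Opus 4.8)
The plan is to recognize the formula defining $G^2$ as the representation, via the method of characteristics, of the classical solution to the linear transport equation in the statement, and then to verify every claimed identity by direct differentiation. Abbreviating $\psi(t):=\int_0^t p^{B-A}[g(s)]\,ds$, so that $\psi(0)=0$ and the backward characteristic through $(t,x)$ is $s\mapsto x+\psi(t)-\psi(s)$, the defining formula reads
\[
G^2(t,x)=v_0\bigl(x+\psi(t)\bigr)+\int_0^t f[g(s)]\bigl(x+\psi(t)-\psi(s)\bigr)\,ds .
\]
The initial condition $G(0)=s_0$ is then immediate, since $\psi(0)=0$ gives $G^1(0)=B_0$ and $G^2(0,x)=v_0(x)$. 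For the first component, $p^{B-A}=p^B-p^A$ is Lipschitz on $E$ and bounded by $1$ in absolute value, so $s\mapsto p^{B-A}[g(s)]$ is continuous (using continuity of $g$) and the fundamental theorem of calculus gives $\psi\in C^1$ with $\psi'(t)=p^{B-A}[g(t)]$, i.e.\ $dG^1(t)=p^{B-A}[g(t)]\,dt$.

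Next I would differentiate $G^2$ under the integral sign. Since $g(s)\in E'$, Assumption~\ref{Lipschitz}(2) ensures each $f[g(s)](\cdot)$ is $C^1$ with spatial derivative bounded in absolute value by $M$, while $v_0\in C^1$ with $|v_0'|\le C$ by Remark~\ref{initialshift}; the integrand and its $x$-derivative are therefore dominated uniformly on the compact interval $[0,t]$, which legitimizes the interchange of differentiation and integration. This yields
\[
G^2_x(t,x)=v_0'\bigl(x+\psi(t)\bigr)+\int_0^t \partial_x f[g(s)]\bigl(x+\psi(t)-\psi(s)\bigr)\,ds ,
\]
so $|G^2_x(t,x)|\le C+TM=:\ol{J}$ uniformly in $g$, and continuity of the integrand in $x$ shows $G^2(t,\cdot)\in C^1$; together with the a priori sup-norm bound $\ol{K}$ (which makes $F$ map into $\tilde E$) this gives $G\in\tilde E'$.

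For the time derivative I would apply the Leibniz rule: differentiating the upper limit produces the boundary term $f[g(t)](x)$ (the inner argument collapsing to $x$ at $s=t$), while the chain rule feeds $\psi'(t)=p^{B-A}[g(t)]$ through every occurrence of $\psi(t)$. Collecting the $\psi'(t)$-weighted terms, they reassemble exactly into $p^{B-A}[g(t)]\,G^2_x(t,x)$, so that
\[
G^2_t(t,x)=p^{B-A}[g(t)]\,G^2_x(t,x)+f[g(t)](x),
\]
which is the asserted PDE; and $|G^2_t(t,x)|\le \ol{J}+M=:\ol{L}$ using $|p^{B-A}|\le 1$ and $\|f[s]\|_\infty\le M$. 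Both $\ol J$ and $\ol L$ are manifestly independent of $g$.

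The hard part is not the algebra but the analytic justification: securing differentiation under the integral sign and, above all, the \emph{classical} (everywhere, not merely almost everywhere) validity of the time derivatives. This is exactly where the $C^1$-regularity of $v_0$ and of $f[s](\cdot)$ for $s\in E'$, the uniform bounds $|v_0'|\le C$, $\|f[s]\|_\infty\le M$, $|\partial_x f[s]|\le M$ of Assumption~\ref{Lipschitz}, and the continuity of $t\mapsto p^{B-A}[g(t)]$ (hence of $\psi'$) are used; the last requires $g$ continuous, which one may assume since $F$ maps into functions that are Lipschitz in $t$ and $C^1$ in $x$. Once the interchange is secured the PDE identity reduces to bookkeeping of chain-rule and boundary contributions, and the uniform derivative bounds follow at once from the stated a priori estimates.
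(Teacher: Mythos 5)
Your proof is correct and follows the same route as the paper: the paper simply invokes ``the general theory of first-order PDEs'' for the characteristics representation and then records exactly the bounds $\ol{J}=\sup|v_0'|+T\sup|(f[s])'|$ and $\ol{L}=\ol{J}+M$ that you derive, so your write-up is a more detailed version of the same argument. The only point worth noting is that you correctly flag (and the paper glosses over) that classical differentiability of $\psi$ requires continuity of $t\mapsto p^{B-A}[g(t)]$, which holds for the iterates actually fed into $F$ but is not part of the definition of $\tilde{E}'$.
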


\begin{proof}
It follows from the general theory of first-order PDEs that $G$ solves the claimed PDE. Moreover, by Assumptions \ref{initial} and \ref{Lipschitz},
\begin{eqnarray*}
\left|G^2_x(t,x)\right|&\leq&\sup_{x\in\R}\left|v_{0}'(x)\right|+\ T\cdot\sup_{\substack{x\in\R\\ s\in E'}}\left|\left(f[s]\right)'(x)\right|=:\ol{J}<\infty,\\
\left|G^2_t(t,x)\right|&\leq&\left|G^2_x(t,x)\right|+\left|f[g(t)](x)\right|\leq\ol{J}+M=:\ol{L}<\infty
\end{eqnarray*}
as well as
\[\sup_{t\in[0,T]}\left\Vert G(t)\right\Vert_E\leq B_0+T+\left\Vert v_{0}\right\Vert_{L^2}+T\cdot \sup_{s\in E}\left\Vert f[s]\right\Vert_{L^2}\leq \ol{K}.\]
\end{proof}

If we find a fixed point of $F$ which lies in $\tilde{E}'$, then Lemma \ref{dif} tells us that it must indeed solve (\ref{system}). To do this, we will show that the function $F$ is Lipschitz continuous to conclude with Banach's fixed point theorem. 
In the following we write $F_{B,t}(g):=G^1(t)$ and $F_{v,t}(g):=G^2(t)$ for $G$ defined as above. Then:
\[\left|F_{B,t}(g)-F_{B,t}(\tilde{g})\right|\leq\int_0^t\left|p^{B-A}[g(s)]-p^{B-A}\left[\tilde{g}(s)\right]\right|ds\leq 2L\int_0^t\left\Vert g(s)-\tilde{g}(s)\right\Vert_Eds.\]
Moreover,
\begin{flalign*}
\left\Vert F_{v,t}(g)-F_{v,t}(\tilde{g})\right\Vert_{L^2} & \leq\left\Vert v_{0}\left(\cdot+\int_0^tp^{B-A}\left[g(s)\right]ds\right)-v_{0}\left(\cdot+\int_0^tp^{B-A}\left[\tilde{g}(s)\right]ds\right)\right\Vert_{L^2}\\
&~~+\int_0^t \left\Vert f\left[g(s)\right]\left(\cdot+\int_s^tp^{B-A}\left[g(u)\right]du\right)-f\left[\tilde{g}(s)\right]\left(\cdot+\int_s^tp^{B-A}\left[\tilde{g}(u)\right]du\right)\right\Vert_{L^2} ds.
\end{flalign*}
Since $v_0'$ is uniformly bounded with bounded support the mean value theorem along with Assumption \ref{Lipschitz} yields,
\begin{eqnarray*}
\left\Vert v_{0}\left(\cdot+\int_0^tp^{B-A}\left[g(s)\right]ds\right)-v_{0}\left(\cdot+\int_0^tp^{B-A}\left[\tilde{g}(s)\right]ds\right)\right\Vert_{L^2}\leq C \int_0^t\left\Vert g(s)-\tilde{g}(s)\right\Vert_Eds.
\end{eqnarray*}
Similarly, as $\left|\left(f[s]\right)'(\cdot)\right|$ is uniformly bounded and has bounded support in $[-M,M]$ for all $s\in E'$ and as $f$ is Lipschitz continuous,
\begin{eqnarray*}
&&\left\Vert f\left[g(s)\right]\left(\cdot+\int_s^tp^{B-A}\left[g(u)\right]du\right)-f\left[\tilde{g}(s)\right]\left(\cdot+\int_s^tp^{B-A}\left[\tilde{g}(u)\right]du\right)\right\Vert_{L^2}\\
&\leq& C\int_s^t\left|p^{B-A}[g(u)]-p^{B-A}\left[\tilde{g}(u)\right]\right|du+\left\Vert f\left[g(s)\right]-f\left[\tilde{g}(s)\right]\right\Vert_{L^2}\\
&\leq& 2CL \int_s^t\left\Vert g(u)-\tilde{g}(u)\right\Vert_Edu+L\left\Vert g(s)-\tilde{g}(s)\right\Vert_E.
\end{eqnarray*}
Hence, there exists some $\hat{K}>0$ such that for all $t\in[0,T]$,
\[\left\Vert F_{t}(g)-F_{t}(\tilde{g})\right\Vert_{E}\leq\hat{K}\int_0^t\left\Vert g(s)-\tilde{g}(s)\right\Vert_Eds.\]
Now, the space $\tilde E$ is also a Banach space with respect to the equivalent weighted norm
\[\left\Vert g\right\Vert_*:=\sup_{0\leq t\leq T}e^{-\alpha t}\left\Vert g\left(t\right)\right\Vert_E\]
for any $\alpha>0$. Choosing $\alpha:=2\hat{K}$ we get
\[\left\Vert F_{t}(g)-F_{t}(\tilde{g})\right\Vert_{E}\leq\hat{K}\int_0^te^{2\hat{K}s}\left\Vert g-\tilde{g}\right\Vert_{*}ds\leq\frac{1}{2} e^{2\hat{K}t}\left\Vert g-\tilde{g}\right\Vert_{*}\]
and
\[\left\Vert F(g)-F(\tilde{g})\right\Vert_{*}\leq\frac{1}{2}\left\Vert g-\tilde{g}\right\Vert_{*}.\]
Therefore, by Banach's fixed point theorem there exists a unique fixed point $\hat{S}$ of $F$. As noted above $\hat{S}$ solves (\ref{system}). Moreover, the sequence of continuous time models defined via $\hat{S}^0\equiv s_0$ and $\hat{S}^{m+1}:=F\left(\hat{S}^m\right),\ m\in\N_0,$ converges to $\hat S$. We have the following result. 

\begin{thm}\label{T3} The fixed point $\hat{S}$ solves the ODE/PDE system (\ref{system}) and 
\[\lim_{m\ra\infty}\sup_{t\in[0,T]}\left\Vert\hat{S}^m(t)-\hat{S}(t)\right\Vert_E=0.\]
\end{thm}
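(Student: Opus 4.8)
The plan is to treat the two assertions separately, since the convergence statement is almost free while all the real work lies in showing that the fixed point is regular enough to solve (\ref{system}) classically. For the convergence, I would simply use that Banach's fixed point theorem (already applied above) gives $\Vert\hat{S}^m-\hat{S}\Vert_*\to 0$ in the weighted norm, and then invoke equivalence of $\Vert\cdot\Vert_*$ with the uniform norm: since $e^{-\alpha t}\in[e^{-\alpha T},1]$ on $[0,T]$ one has $\sup_{t}\Vert g(t)\Vert_E\le e^{\alpha T}\Vert g\Vert_*$, so $\sup_{t\in[0,T]}\Vert\hat{S}^m(t)-\hat{S}(t)\Vert_E\le e^{\alpha T}\Vert\hat{S}^m-\hat{S}\Vert_*\to 0$. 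This disposes of the displayed limit.

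For the first assertion, the strategy is to reduce everything to the claim $\hat{S}\in\tilde{E}'$ and then quote Lemma \ref{dif}. Indeed, once $\hat{S}\in\tilde{E}'$, applying Lemma \ref{dif} to $g=\hat{S}$ and using the fixed-point identity $F(\hat{S})=\hat{S}$ turns the three relations of that lemma into exactly the initial condition, the ODE for $B$, and the transport PDE for $v$ in (\ref{system}); so it suffices to prove the regularity $\hat{S}\in\tilde{E}'$. I would first establish that every iterate lies in $\tilde{E}'$: the base case $\hat{S}^0\equiv s_0=(B_0,v_0)$ is in $\tilde{E}'$ because $v_0\in C^1$ by Assumption \ref{initial}, and the inductive step $\hat{S}^{m+1}=F(\hat{S}^m)\in\tilde{E}'$ is precisely the content of Lemma \ref{dif}. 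Crucially, Lemma \ref{dif} supplies the bounds $|\hat{v}^m_x|\le\ol{J}$ and $|\hat{v}^m_t|\le\ol{L}$ uniformly in $m$, and all $\hat{v}^m(t,\cdot)$ are supported in the fixed compact interval $[-M-T,M+T]$.

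It remains to pass these properties to the limit, and this is where the main obstacle sits. On the compact rectangle $[0,T]\times[-M-T,M+T]$ the family $\{\hat{v}^m\}$ is uniformly bounded and, thanks to the uniform first-derivative bounds, equicontinuous; Arzel\`a--Ascoli together with the already-known $L^2$ limit then identifies the uniform limit as $\hat{v}$ and shows it is Lipschitz in $(t,x)$. The hard part is upgrading this to genuine $C^1$ regularity, which is exactly what membership in $\tilde{E}'$ requires: uniform bounds on the first derivatives alone only yield a Lipschitz limit (the functions $\sqrt{x^2+1/m}$ show that $C^1$ can be destroyed in such a limit), so I must separately prove that the spatial derivatives $\hat{v}^m_x(t,\cdot)$ converge. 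The plan is to apply Arzel\`a--Ascoli to $\{\hat{v}^m_x(t,\cdot)\}$, for which I need their equicontinuity; differentiating the representation $\hat{v}^{m+1}(t,x)=v_0(x+P^m(t))+\int_0^t f[\hat{S}^m(s)](x+P^m(t)-P^m(s))\,ds$, with $P^m(t):=\int_0^t p^{B-A}[\hat{S}^m(u)]\,du$, this reduces to a modulus of continuity for the derivatives $(f[s])'$ that is uniform in $s\in E'$, i.e.\ to the family $\{f[s]\}$ being uniformly $C^1$ in $x$. Granting such equicontinuity, the uniformly convergent subsequence of derivatives must coincide with $\hat{v}_x(t,\cdot)$, so $\hat{v}(t,\cdot)\in C^1$ and hence $\hat{S}\in\tilde{E}'$; Lemma \ref{dif} then completes the proof that $\hat{S}$ solves (\ref{system}).
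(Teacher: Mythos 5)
Your overall route is the same as the paper's: the displayed convergence is exactly the Banach fixed point estimate $\Vert\hat{S}^m-\hat{S}\Vert_*\le 2^{-m}\Vert\hat{S}^0-\hat{S}\Vert_*$ combined with the equivalence of $\Vert\cdot\Vert_*$ and the uniform norm, and the first assertion is reduced, as in the paper, to the observation that once $\hat{S}\in\tilde{E}'$ the fixed-point identity $F(\hat{S})=\hat{S}$ turns Lemma \ref{dif} into the three relations of (\ref{system}). That half of your argument is correct and identical to the paper's.

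Where you diverge is that you take seriously the step the paper passes over in silence: the paper writes ``if we find a fixed point of $F$ which lies in $\tilde{E}'$, then Lemma \ref{dif} tells us that it must indeed solve (\ref{system})'' and then simply asserts that $\hat{S}$ solves the system, without ever verifying $\hat{S}\in\tilde{E}'$. You are right that this needs an argument, and right about the precise difficulty: Lemma \ref{dif} gives $\hat{S}^m\in\tilde{E}'$ with derivative bounds $|\hat{v}^m_x|\le\ol{J}$, $|\hat{v}^m_t|\le\ol{L}$ uniform in $m$, but a uniform limit of $C^1$ functions with bounded derivatives need only be Lipschitz. However, your proposed repair does not close under the paper's stated hypotheses. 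The Arzel\`a--Ascoli argument for $\{\hat{v}^m_x(t,\cdot)\}$ needs equicontinuity of the derivatives, which via the representation $\hat{v}^{m+1}(t,x)=v_0(x+P^m(t))+\int_0^t f[\hat{S}^m(s)](x+P^m(t)-P^m(s))\,ds$ requires a modulus of continuity for $x\mapsto(f[s])'(x)$ uniform over $s$; Assumption \ref{Lipschitz} only provides the bound $|(f[s])'|\le M$ for $s\in E'$ (and says nothing about continuity of $s\mapsto(f[s])'$), so your ``granting such equicontinuity'' is an unstated strengthening of the assumptions, not something you can quote. Moreover, even identifying a subsequential uniform limit of $\hat{v}^m_x(t,\cdot)$ with $\partial_x\hat{v}(t,\cdot)$ requires convergence of the derivatives of the full sequence (or of the functions together with a convergent sequence of derivatives), which is the same missing ingredient. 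In short: your proof of the regularity claim is conditional on a hypothesis the paper does not state --- but the paper's own treatment of this point consists of no argument at all, so your version is, if anything, the more honest one; it correctly locates a genuine gap rather than introducing a new one.
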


The following lemma shows that $\hat{S}^m=\left(\hat{B}^m,\hat{v}^m\right)$ is Lipschitz continuous with respect to time.

\begin{lem}\label{Lip} There exists a constant $\hat{L}>0$ such that for all $m\in\N$ and all $s,t\in[0,T]$,
\[\left\Vert\hat{S}^m(t)-\hat{S}^m(s)\right\Vert_E\leq \hat{L}|t-s|. \]
\end{lem}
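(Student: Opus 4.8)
The plan is to reduce the statement to the uniform time-derivative bound already furnished by Lemma \ref{dif}. The key observation is that each iterate is of the form $\hat{S}^m = F(\hat{S}^{m-1})$, so if I can ensure that the argument $\hat{S}^{m-1}$ lies in $\tilde{E}'$, then Lemma \ref{dif} applies to $\hat{S}^m = (\hat{B}^m, \hat{v}^m)$ and yields $|\partial_t \hat{v}^m(t,x)| \le \ol{L}$ for all $(t,x)$, with $\ol{L}$ \emph{independent of the argument} and hence independent of $m$. First I would verify by induction that $\hat{S}^m \in \tilde{E}'$ for every $m$: the base case $\hat{S}^0 \equiv s_0 = (B_0, v_0)$ lies in $\tilde{E}'$ because $v_0 \in C^1$ by Assumption \ref{initial}, and the induction step is exactly the closure statement that $g \in \tilde{E}'$ implies $F(g) \in \tilde{E}'$ from Lemma \ref{dif}.

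For the price component the estimate is immediate: since $\hat{B}^m(t) = B_0 + \int_0^t p^{B-A}[\hat{S}^{m-1}(u)]\,du$ and $|p^{B-A}| = |p^B - p^A| \le 1$ because $p^A, p^B$ take values in $[0,1]$, I get $|\hat{B}^m(t) - \hat{B}^m(s)| \le |t - s|$. For the volume component I would combine the pointwise bound $|\partial_t \hat{v}^m(u,x)| \le \ol{L}$ from Lemma \ref{dif} with the fact, noted right after the definition of $F$, that $\hat{v}^m(u,\cdot)$ is supported in the fixed compact interval $[-M-T, M+T]$ for every $u$. Writing $\hat{v}^m(t,x) - \hat{v}^m(s,x) = \int_s^t \partial_t \hat{v}^m(u,x)\,du$ gives $|\hat{v}^m(t,x) - \hat{v}^m(s,x)| \le \ol{L}\,|t-s|$ pointwise, and integrating the square over the common support yields $\|\hat{v}^m(t) - \hat{v}^m(s)\|_{L^2} \le \ol{L}\sqrt{2(M+T)}\,|t-s|$. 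Adding the two contributions gives the claim with $\hat{L} := 1 + \ol{L}\sqrt{2(M+T)}$.

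The point I would emphasize is the uniformity in $m$: every constant entering the argument — the bound $\ol{L}$ from Lemma \ref{dif}, the length $2(M+T)$ of the support interval, and the bound $1$ on $|p^{B-A}|$ — depends only on the data $(M, T, v_0, f, p^A, p^B)$ and not on the particular iterate, which is exactly what makes $\hat{L}$ a single constant valid for all $m$. The main (though modest) obstacle is therefore not any single estimate but the bookkeeping that keeps everything uniform: one must check that the iterates stay in $\tilde{E}'$ so that the $C^1$-in-space hypothesis behind Lemma \ref{dif} is available at every step, and that the support interval does not grow with $m$ — both of which hold because $F$ maps $\tilde{E}'$ into itself and produces functions supported in $[-M-T, M+T]$ regardless of the input.
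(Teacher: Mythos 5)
Your proof is correct and follows essentially the same route as the paper's: the price increment is bounded by $|p^{B-A}|\le 1$, and the volume increment by the uniform time-derivative bound $\ol{L}$ from Lemma \ref{dif} integrated over the fixed compact support $[-M-T,M+T]$, yielding the same constant $\hat{L}=1+\ol{L}\sqrt{2(M+T)}$. Your explicit inductive check that the iterates remain in $\tilde{E}'$ (so that Lemma \ref{dif} is applicable at every step) is a point the paper leaves implicit, but it does not change the argument.
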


\begin{proof}
W.l.o.g.~$s<t$. By the mean value theorem there exists some $u\in(s,t)$ and for every $x\in\R$ a point $u_x\in(s,t)$ such that
\begin{eqnarray*}
\left|\hat{B}^m(t)-\hat{B}^m(s)\right|&=&\left|\left(\hat{B}^{m}\right)'(u)\right|(t-s)=\left|p^{B-A}\left[\hat{S}^{m-1}(u)\right]\right|(t-s)\leq t-s,\\
\left|\hat{v}^m(t,x)-\hat{v}^m(s,x)\right|&=&\left| \hat{v}_t^{m}(u_x,x)\right|(t-s)\leq\ol{L}(t-s),
\end{eqnarray*}
where the last inequality follows from Lemma \ref{dif}. 
Since the function $\hat{v}^m(t,\cdot)$ and its partial derivatives have compact support in $[-M-T,M+T]$,
\begin{eqnarray*}
\left\Vert\hat{v}^m(t)-\hat{v}^m(s)\right\Vert_{L^2}=\left\Vert\1_{[-M-T,M+T]}(\cdot)\left|\hat{v}_t^{m+1}(u_x,x)\right|(t-s)\right\Vert_{L^2}\leq \sqrt{2(M+T)}\cdot\ol{L}(t-s).
\end{eqnarray*}
Setting $\hat{L}:=1+\ol{L}\sqrt{2(M+T)}$ it follows that
\[\left\Vert\hat{S}^m(t)-\hat{S}^m(s)\right\Vert_E\leq \hat{L}|t-s| \quad\forall\ m\in\N.\]
\end{proof}

\subs{Uniqueness}

In order to show uniqueness of the solution to (\ref{system}) we assume to the contrary that there exists another solution $S$ and define the shifted volume density processes
\[\ol{v}(t,x):=\hat{v}\left(t,x-\int_0^tp^{B-A}\left[\hat{S}(s)\right]ds\right),\qquad \ul{v}(t,x):=v\left(t,x-\int_0^tp^{B-A}\left[S(s)\right]ds\right).\]
Then 
\begin{eqnarray*}
\ol{v}_t(t,x)&=&\hat{v}_t\left(t,x-\int_0^tp^{B-A}\left[\hat{S}(s)\right]ds\right)-p^{B-A}\left[\hat{S}(t)\right]\cdot\hat{v}_x\left(t,x-\int_0^tp^{B-A}\left[\hat{S}(s)\right]ds\right)\\
&=&f\left[\hat{S}(t)\right]\left(x-\int_0^tp^{B-A}\left[\hat{S}(s)\right]ds\right)
\end{eqnarray*}
and similarly for $\ul{v}$. Integrating with respect to $t$ and using that $\ol{v}(0)=\hat{v}(0)=v_0=v(0)=\ul{v}(0)$ we obtain from Assumption \ref{Lipschitz} and the mean value theorem,
\begin{flalign*}
\left\Vert\ol{v}(t)-\ul{v}(t)\right\Vert_{L^2}
& \leq\int_0^t\left\Vert f\left[\hat{S}(s)\right]\left(\cdot-\int_0^sp^{B-A}\left[\hat{S}(u)\right]du\right)-f\left[S(s)\right]\left(\cdot-\int_0^sp^{B-A}\left[{S}(u)\right]du\right)\right\Vert_{L^2} ds\\
& \leq\int_0^t\left(\left\Vert f\left[\hat{S}(s)\right]-f\left[S(s)\right]\right\Vert_{L^2} +C\int_0^s\left| p^{B-A}\left[\hat{S}(u)\right]-p^{B-A}\left[{S}(u)\right]\right|du\right)ds\\
&\leq  C\int_0^t \left\Vert\hat{S}(s)-S(s)\right\Vert_Eds.
\end{flalign*}

Moreover, as $|\hat{v}_x(t,\cdot)|$ is uniformly bounded by $\ol{J}$ with bounded support according to Lemma \ref{dif}, again by the mean value theorem
\begin{eqnarray*}
&&\left\Vert\hat{v}(t,\cdot)-{v}(t,\cdot)\right\Vert_{L^2}=
\left\Vert\hat{v}\left(t,\cdot-\int_0^tp^{B-A}\left[{S}(s)\right]ds\right)-{v}\left(t,\cdot-\int_0^tp^{B-A}\left[{S}(s)\right]ds\right)\right\Vert_{L^2}\\
&\leq&
\left\Vert\ol{v}(t,\cdot)-\ul{v}(t,\cdot)\right\Vert_{L^2}+
\left\Vert\hat{v}\left(t,\cdot-\int_0^tp^{B-A}\left[\hat{S}(s)\right]ds\right)-\hat{v}\left(t,\cdot-\int_0^tp^{B-A}\left[{S}(s)\right]ds\right)\right\Vert_{L^2}\\
&\leq& C\int_0^t \left\Vert \hat{S}\left(s\right)-S\left(s\right)\right\Vert_E ds+C
\left|\int_0^tp^{B-A}\left[\hat{S}(s)\right]ds-\int_0^tp^{B-A}\left[{S}(s)\right]ds\right| \\
& \leq & C\int_0^t \left\Vert \hat{S}\left(s\right)-S\left(s\right)\right\Vert_E ds.
\end{eqnarray*}
Furthermore,
\[\left|\hat{B}(t)-{B}(t)\right|\leq\int_0^t\left|p^{B-A}\left[\hat{S}(s)\right]-p^{B-A}\left[S(s)\right]\right|ds\leq 2L\int_0^t\left\Vert\hat{S}(s)-S(s)\right\Vert_Eds.\]
Therefore,
\[\left\Vert\hat{S}(t)-S(t)\right\Vert_E\leq C\int_0^t\left\Vert\hat{S}(s)-S(s)\right\Vert_Eds\]
and the continuous version of Gronwall's lemma, cf.~Lemma \ref{Gronwall2}, implies that $\hat{S}\equiv S$.

\s{Convergence of the discrete order book models}\label{proof}

The goal of this section is to prove Theorem \ref{limit}. Note that as opposed to \cite{HP} we cannot treat the price process independently of the volume densities because the conditional event probabilities and order placements resp.~cancelations do depend on both, prices and volumes. 

In the following we set for all $n\in\N$ and $k\in\R$,
\[\left(T_+^{(n)}\right)^k(f)(\cdot):=f\left(\cdot+k\Delta x^{(n)}\right).\]
Especially, this means that $\left(T_+^{(n)}\right)^{-1}\equiv T_-^{(n)}$. The following important Lemma  deals with multiple applications of the translation operator. It will be used repeatedly in what follows. 

\begin{lem}\label{counting}
If $f\in L^2$ is a step function in the $\Delta x^{(n)}$-grid, i.e.~$f$ satisfies $f(x)=f\left(l\Delta x^{(n)}\right)$ for all $x\in\left[l\Delta x^{(n)},(l+1)\Delta x^{(n)}\right)$ and all $l\in\N$, then for all $k,h\in\R$,
\[\left\Vert \left(\left(T_+^{(n)}\right)^{k}-\left(T_+^{(n)}\right)^{h}\right)(f)\right\Vert_{L^2}\leq  (|k-h|+1)\left\Vert \left( T_+^{(n)}-I\right)(f)\right\Vert_{L^2}.\]
\end{lem}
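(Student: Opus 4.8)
The plan is to use the isometry of $T_+^{(n)}$ to collapse the two shifts into one, and then to treat the integer and fractional parts of that shift separately. First I would write
\[
\left(T_+^{(n)}\right)^{k}-\left(T_+^{(n)}\right)^{h}=\left(T_+^{(n)}\right)^{h}\left(\left(T_+^{(n)}\right)^{k-h}-I\right),
\]
so that, since $T_+^{(n)}$ is isometric on $L^2$, the quantity to be bounded equals $\left\Vert\left(\left(T_+^{(n)}\right)^{m}-I\right)(f)\right\Vert_{L^2}$ with $m:=k-h$. Applying the isometry $\left(T_+^{(n)}\right)^{-m}$ shows this in turn equals $\left\Vert\left(\left(T_+^{(n)}\right)^{-m}-I\right)(f)\right\Vert_{L^2}$, so without loss of generality $m\geq 0$, and it suffices to prove
\[
\left\Vert\left(\left(T_+^{(n)}\right)^{m}-I\right)(f)\right\Vert_{L^2}\leq(m+1)\left\Vert\left(T_+^{(n)}-I\right)(f)\right\Vert_{L^2},\qquad m=|k-h|.
\]

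Next I would split $m=N+\theta$ with $N:=\lfloor m\rfloor\in\N_0$ and $\theta:=m-N\in[0,1)$, and decompose
\[
\left(T_+^{(n)}\right)^{m}-I=\left(T_+^{(n)}\right)^{N}\left(\left(T_+^{(n)}\right)^{\theta}-I\right)+\left(\left(T_+^{(n)}\right)^{N}-I\right).
\]
For the integer part, the telescoping identity $\left(T_+^{(n)}\right)^{N}-I=\sum_{j=0}^{N-1}\left(T_+^{(n)}\right)^{j}\left(T_+^{(n)}-I\right)$ together with the triangle inequality and the isometry gives the bound $N\,\left\Vert\left(T_+^{(n)}-I\right)(f)\right\Vert_{L^2}$; this step uses nothing about $f$ beyond membership in $L^2$.

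The fractional part is where the step-function hypothesis is essential, and I expect this to be the only real obstacle: for a general $L^2$ function a shift by $\theta\Delta x^{(n)}$ bears no clean relation to the unit shift. Writing $f_l$ for the value of $f$ on the cell $\left[l\Delta x^{(n)},(l+1)\Delta x^{(n)}\right)$, a direct computation shows that $\left(\left(T_+^{(n)}\right)^{\theta}-I\right)(f)$ vanishes on $\left[l\Delta x^{(n)},(l+1-\theta)\Delta x^{(n)}\right)$ and equals $f_{l+1}-f_l$ on the remaining subinterval of length $\theta\Delta x^{(n)}$, whereas $\left(T_+^{(n)}-I\right)(f)$ equals $f_{l+1}-f_l$ on the whole cell; hence
\[
\left\Vert\left(\left(T_+^{(n)}\right)^{\theta}-I\right)(f)\right\Vert_{L^2}^2=\theta\,\Delta x^{(n)}\sum_{l}(f_{l+1}-f_l)^2=\theta\left\Vert\left(T_+^{(n)}-I\right)(f)\right\Vert_{L^2}^2\leq\left\Vert\left(T_+^{(n)}-I\right)(f)\right\Vert_{L^2}^2,
\]
using $\sqrt{\theta}\leq 1$. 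Combining the two estimates through the triangle inequality (and the isometry, which removes the outer $\left(T_+^{(n)}\right)^{N}$) yields the factor $N+1=\lfloor m\rfloor+1\leq m+1=|k-h|+1$, which is the claim.
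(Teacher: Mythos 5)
Your proposal is correct and follows essentially the same route as the paper: reduce to a single nonnegative shift via the isometry, handle the integer part by iterating the unit-shift bound (your telescoping identity is the paper's induction in closed form), and use the step-function structure to control the fractional shift. Your fractional-part computation is slightly sharper — you obtain the exact factor $\sqrt{\theta}$ where the paper only observes that the difference is pointwise either $\left(T_+^{(n)}-I\right)(f)$ or $0$ — but both yield the same bound $\lfloor m\rfloor+1\leq |k-h|+1$.
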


\begin{proof}
Because $\left(T^{(n)}_+\right)^{h}$ is an isometry, it is sufficient to prove the claim for $h=0$. 
To do this we first consider the case $k\in\N$ and claim that in this case even 
\[\left\Vert \left(\left(T_+^{(n)}\right)^k-I\right)(f)\right\Vert_{L^2}\leq  k\left\Vert \left( T_+^{(n)}-I\right)(f)\right\Vert_{L^2}.\]
Obviously, this is true for $k=1$. Assuming that it is true for $k-1$ we get
\begin{eqnarray*}
\left\Vert \left(\left(T^{(n)}_+\right)^k-I\right)(f)\right\Vert_{L^2}\leq\left\Vert T_+^{(n)}\circ\left(\left(T^{(n)}_+\right)^{k-1}-I\right)(f)\right\Vert_{L^2}+\left\Vert \left(T^{(n)}_+-I\right)(f)\right\Vert_{L^2}\\
=\left\Vert \left(\left(T^{(n)}_+\right)^{k-1}-I\right)(f)\right\Vert_{L^2}+\left\Vert \left(T^{(n)}_+-I\right)(f)\right\Vert_{L^2}\leq k\left\Vert \left(T^{(n)}_+-I\right)(f)\right\Vert_{L^2}
\end{eqnarray*}
and hence the inequality follows by induction for all $k\in\N$. Next for $k\in(0,1)$ either
\[\left(\left(T^{(n)}_+\right)^k-I\right)(f)(x)=\left(T_+^{(n)}-I\right)(f)(x)\qquad\text{or}\qquad\left(\left(T^{(n)}_+\right)^k-I\right)(f)(x)=0, \quad x\in\R_+.\]
Therefore, in this case
\[\left\Vert \left(\left(T^{(n)}_+\right)^k-I\right)(f)\right\Vert_{L^2}\leq\left\Vert \left(T^{(n)}_+-I\right)(f)\right\Vert_{L^2}.\]
Now take any $k\in\R_+$. Then,
\begin{eqnarray*}
\left\Vert \left(\left(T^{(n)}_+\right)^k-I\right)(f)\right\Vert_{L^2}&\leq&\left\Vert \left(\left(T^{(n)}_+\right)^{\lfloor k\rfloor}-I\right)(f)\right\Vert_{L^2}+\left\Vert \left(\left(T^{(n)}_+\right)^k-\left(T^{(n)}_+\right)^{\lfloor k\rfloor}\right)(f)\right\Vert_{L^2}\\
&\leq& \lfloor k\rfloor\left\Vert \left(T^{(n)}_+-I\right)(f)\right\Vert_{L^2}+\left\Vert \left(\left(T^{(n)}_+\right)^{k-\lfloor k\rfloor}-I\right)(f)\right\Vert_{L^2}\\
&\leq& (k+1)\left\Vert \left(T^{(n)}_+-I\right)(f)\right\Vert_{L^2}.
\end{eqnarray*}
Finally, the general case follows from the isometry property of the translation operator:
\[\left\Vert \left(\left(T^{(n)}_+\right)^k-I\right)(f)\right\Vert_{L^2}=\left\Vert \left(T^{(n)}_+\right)^{-k}\circ\left(\left(T^{(n)}_+\right)^k-I\right)(f)\right\Vert_{L^2}=\left\Vert \left(I-\left(T^{(n)}_+\right)^{-k}\right)(f)\right\Vert_{L^2}.\]
\end{proof}

\subs{A deterministic approximation of the discrete model for fixed $n\in\N$}

Recall that by definition
\begin{flalign*}
v^{(n)}_{k} =& \left(T_+^{(n)}\right)^{\sum_{j=0}^{k-1}\1_j^{(n),B-A}}\left(v_0^{(n)}\right)+\Delta v^{(n)}\sum_{j=0}^{k-1}\left(T_+^{(n)}\right)^{\sum_{i=j}^{k-1}\1_i^{(n),B-A}}\left(M_{j}^{(n)}\right).
\end{flalign*}

For each $n\in\N$ we are now going to define two approximations to the discrete model dynamics $S^{(n)}$, a deterministic non-linear approximation $\tilde{S}^{(n)}$ in which the event indicator variables are replaced by their averages conditioned on the state of the approximating sequence and a random approximation $\ol{S}^{(n)}$ in which the event indicator variables are replaced by their averages conditioned on the random state of the original state sequence. 

More precisely, we define for each $n\in\N$ the process $\tilde{S}^{(n)}$ through 
\[\tilde{B}^{(n)}_{k}:=B_0^{(n)}+\Delta x^{(n)}\sum_{j=0}^{k-1}p^{(n),B-A}\left[\tilde{S}^{(n)}_j\right]\]
and  
\begin{flalign*}
 \tilde{v}^{(n)}_{k}:=& \left(T_+^{(n)}\right)^{\sum_{j=0}^{k-1}p^{(n),B-A}\left[\tilde{S}^{(n)}_j\right]}\left(v_{0}^{(n)}\right)+\Delta v^{(n)} \sum_{j=0}^{k-1}\left(T_+^{(n)}\right)^{\sum_{i=j}^{k-1}p^{(n),B-A}\left[\tilde{S}^{(n)}_i\right]}\left(f^{(n)}\left[\tilde{S}^{(n)}_j\right]\right)
\end{flalign*}
for $k=0,1,\dots,\left\lfloor\frac{T}{\Delta t^{(n)}}\right\rfloor$.

Moreover, we define a second process $\ol{S}^{(n)}$ through
\[\ol{B}^{(n)}_{k}:=B^{(n)}_0+\Delta x^{(n)}\sum_{j=0}^{k-1}p^{(n),B-A}\left[{S}^{(n)}_j\right]\]
and
\begin{flalign*}
\ol{v}^{(n)}_{k}:=&\left(T_+^{(n)}\right)^{\sum_{j=0}^{k-1}p^{(n),B-A}\left[{S}^{(n)}_j\right]}\left(v_{0}^{(n)}\right)+
\Delta v^{(n)} \sum_{j=0}^{k-1}\left(T_+^{(n)}\right)^{\sum_{i=j}^{k-1}p^{(n),B-A}\left[{S}^{(n)}_i\right]}\left(f^{(n)}\left[{S}^{(n)}_j\right]\right)
\end{flalign*}
for $k=0,1,\dots,\left\lfloor\frac{T}{\Delta t^{(n)}}\right\rfloor$.

In a first step we are now going to show that the sequence $\ol{S}^{(n)}$ approximates $S^{(n)}$. The proof uses a weak law of large numbers for triangular martingale difference arrays, which can be found in the appendix. Subsequently we show the desired convergence of $\tilde{S}^{(n)}$ to the discrete model dynamics $S^{(n)}$.

In what follows $C>0$ denotes a generic constant that may vary from line to line and is independent of any index $h,i,j,k,l,n$.

\begin{thm}\label{T0}
Under the assumptions of Theorem \ref{limit} for all $\eps>0$,
\[\lim_{n\ra\infty}\p\left(\sup_{0\leq t\leq T}\left\Vert S^{(n)}(t)-\ol{S}^{(n)}(t)\right\Vert_E>\eps\right)=0.\]
\end{thm}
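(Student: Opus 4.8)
The plan is to exploit that $\ol{S}^{(n)}$ is obtained from $S^{(n)}$ by replacing the event indicators $\1_j^{(n),B-A}$ and the placement functions $M_j^{(n)}$ by the conditional expectations $p^{(n),B-A}[S_j^{(n)}]=\E[\1_j^{(n),B-A}\mid\F_j^{(n)}]$ and $f^{(n)}[S_j^{(n)}]=\E[M_j^{(n)}\mid\F_j^{(n)}]$, while keeping the \emph{same} coefficient state $S_j^{(n)}$. Hence $S^{(n)}-\ol{S}^{(n)}$ carries no Lipschitz feedback and is driven purely by martingale differences, so the statement is a weak law of large numbers. I would treat the bid price and the volume density separately, throughout estimating $\E\sup_k\|\cdot\|^2$ (or $\E\sup_k\|\cdot\|$) and concluding with Markov's inequality and the martingale law of large numbers from the appendix, whose engine is Doob's $L^2$ maximal inequality.

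For the price, $B_k^{(n)}-\ol{B}_k^{(n)}=\Delta x^{(n)}\sum_{j<k}D_j^{(n)}$ with $D_j^{(n)}:=\1_j^{(n),B-A}-p^{(n),B-A}[S_j^{(n)}]$ a bounded martingale difference whose conditional variance is at most $2\Delta p^{(n)}$, since $(\1_j^{(n),B-A})^2=\1_j^{(n),A}+\1_j^{(n),B}$. Doob's inequality gives $\E\sup_k(B_k^{(n)}-\ol{B}_k^{(n)})^2\lesssim(\Delta x^{(n)})^2(T/\Delta t^{(n)})\Delta p^{(n)}=T\,\Delta x^{(n)}\cdot\frac{\Delta x^{(n)}\Delta p^{(n)}}{\Delta t^{(n)}}\to0$ by Assumption \ref{scaling}. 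Writing $R_n:=\sup_m\big|\sum_{j<m}D_j^{(n)}\big|$, the same computation yields $\E R_n^2\lesssim1/\Delta x^{(n)}$, hence $\E R_n=\mathcal{O}\big((\Delta x^{(n)})^{-1/2}\big)$; this will be used below.

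For the volume I would start from the explicit representations. With $N_k:=\sum_{j<k}\1_j^{(n),B-A}$, $N_{j,k}:=\sum_{i=j}^{k-1}\1_i^{(n),B-A}$, and $\ol N_k,\ol N_{j,k}$ defined analogously from $p^{(n),B-A}[S_i^{(n)}]$, the difference $v_k^{(n)}-\ol v_k^{(n)}$ splits into an initial term $J_0:=\big[(T_+^{(n)})^{N_k}-(T_+^{(n)})^{\ol N_k}\big](v_0^{(n)})$, a drift-mismatch term $J_1:=\Delta v^{(n)}\sum_{j<k}\big[(T_+^{(n)})^{N_{j,k}}-(T_+^{(n)})^{\ol N_{j,k}}\big]f^{(n)}[S_j^{(n)}]$, and a genuine innovation term $J_2:=\Delta v^{(n)}\sum_{j<k}(T_+^{(n)})^{N_{j,k}}\big(M_j^{(n)}-f^{(n)}[S_j^{(n)}]\big)$. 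The terms $J_0$ and $J_1$ are handled by Lemma \ref{counting}, whose bound is $(|N_\cdot-\ol N_\cdot|+1)$ times $\|(T_+^{(n)}-I)(\cdot)\|_{L^2}$: the latter is $\mathcal{O}(\Delta x^{(n)})$ for $v_0^{(n)}$ by Remark \ref{initialshift} and uniformly $\mathcal{O}(\Delta x^{(n)})$ for $f^{(n)}[s]$ by Remark \ref{shiftf}. Since $|N_k-\ol N_k|\le R_n$ and $|N_{j,k}-\ol N_{j,k}|\le2R_n$, one gets $\|J_0\|_{L^2}\lesssim(R_n+1)\Delta x^{(n)}$ and $\|J_1\|_{L^2}\lesssim\Delta v^{(n)}(T/\Delta t^{(n)})(R_n+1)\Delta x^{(n)}$; taking expectations and using $\E R_n=\mathcal{O}((\Delta x^{(n)})^{-1/2})$ and $\Delta v^{(n)}/\Delta t^{(n)}\to1$ makes both $\mathcal{O}\big((\Delta x^{(n)})^{1/2}\big)\to0$.

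The crux is $J_2$, where a crude application of Lemma \ref{counting} fails: $M_j^{(n)}$ is a single spike of height $\omega_j^{(n)}/\Delta x^{(n)}$, so $\|(T_+^{(n)}-I)M_j^{(n)}\|_{L^2}\sim(\Delta x^{(n)})^{-1/2}$ and the resulting bound diverges. Instead I would exploit the martingale structure directly. Put $g_j:=M_j^{(n)}-f^{(n)}[S_j^{(n)}]$, which satisfies $\E[g_j\mid\F_j^{(n)}]=0$ by Assumption \ref{Lipschitz}(2); using $N_{j,k}=N_k-N_j$ and the isometry property of $T_+^{(n)}$ I would factor out the common random shift $(T_+^{(n)})^{N_k}$,
\[
\Big\|\Delta v^{(n)}\sum_{j<k}(T_+^{(n)})^{N_{j,k}}g_j\Big\|_{L^2}=\Big\|\sum_{j<k}h_j\Big\|_{L^2},\qquad h_j:=\Delta v^{(n)}(T_+^{(n)})^{-N_j}g_j.
\]
Because $N_j=(B_j^{(n)}-B_0^{(n)})/\Delta x^{(n)}$ is $\F_j^{(n)}$-measurable, the operator $(T_+^{(n)})^{-N_j}$ is $\F_j^{(n)}$-measurable, so $\E[h_j\mid\F_j^{(n)}]=0$ and $(h_j)$ is a genuine $L^2(\R)$-valued martingale difference array. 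Orthogonality of the increments and Doob's inequality then give $\E\sup_k\|\sum_{j<k}h_j\|_{L^2}^2\le4(\Delta v^{(n)})^2\sum_{j<K}\E\|g_j\|_{L^2}^2$ with $K=\lfloor T/\Delta t^{(n)}\rfloor$, and since $\E\|g_j\|_{L^2}^2\le\E\|M_j^{(n)}\|_{L^2}^2\le M^2/\Delta x^{(n)}$ this is $\lesssim(\Delta v^{(n)})^2(T/\Delta t^{(n)})/\Delta x^{(n)}=T\,\frac{\Delta v^{(n)}}{\Delta t^{(n)}}\cdot\frac{\Delta v^{(n)}}{\Delta x^{(n)}}\to0$, because $\Delta v^{(n)}/\Delta x^{(n)}\sim(\Delta t^{(n)})^{1-\beta}\to0$ by Assumption \ref{scaling}. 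Collecting the four estimates, both $\E\sup_k|B_k^{(n)}-\ol B_k^{(n)}|$ and $\E\sup_k\|v_k^{(n)}-\ol v_k^{(n)}\|_{L^2}$ tend to $0$, and Markov's inequality yields the claim. I expect this isometry-and-adaptedness rewriting of $J_2$ to be the main obstacle; the remaining terms are a routine combination of Lemma \ref{counting}, the scaling of Assumption \ref{scaling}, and Doob's maximal inequality.
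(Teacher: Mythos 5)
Your proposal is correct and follows essentially the same route as the paper: the same three-way decomposition of $v_k^{(n)}-\ol{v}_k^{(n)}$, the same use of Lemma \ref{counting} together with Remarks \ref{initialshift} and \ref{shiftf} for the shift-mismatch terms, and, crucially, the same isometry trick of factoring out the common random translation so that the placement term becomes a genuine $L^2(\R)$-valued martingale difference array with $\E\Vert X_j^{(n)}\Vert_{L^2}^2\lesssim(\Delta v^{(n)})^2/\Delta x^{(n)}$. The only cosmetic difference is that you apply Doob's maximal inequality and Markov directly (and make the bound $\E R_n=\mathcal{O}((\Delta x^{(n)})^{-1/2})$ explicit), whereas the paper routes the identical variance estimates through the triangular-array weak law of large numbers, Theorem \ref{WLLN}, from the appendix.
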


\begin{proof}
For the bid price component we have
\begin{eqnarray*}
\left|B^{(n)}_{k}-\ol{B}^{(n)}_{k}\right|=\Delta x^{(n)}\left|\sum_{j=0}^{k-1}\1^{(n),B}_j-\1^{(n),A}_j-p^{(n),B}\left[{S}^{(n)}_j\right]+p^{(n),A}\left[{S}^{(n)}_j\right]\right|.
\end{eqnarray*}
By definition the random variables
\[Y_j^{(n)}:=\Delta x^{(n)}\left(\1^{(n),B}_j-\1^{(n),A}_j-p^{(n),B}\left[{S}^{(n)}_j\right]+p^{(n),A}\left[{S}^{(n)}_j\right]\right),\quad j\leq\left\lfloor\frac{T}{\Delta t^{(n)}}\right\rfloor,\quad n\in\N,\]
form a triangular martingale difference array with respect to $\left(\F_j^{(n)}\right)$. If we can show that there exists $\alpha>\frac{1}{2}$ such that
\[\sup_{\substack{j\leq T/ \Delta t^{(n)}\\ n\in\N}}
\left(\frac{\E\left| Y^{(n)}_j\right|^2}{\left(\Delta t^{(n)}\right)^{2\alpha}}\right)<\infty,\]
then Theorem \ref{WLLN} will imply that 
\[\sup_{k\leq \frac{T}{\Delta t^{(n)}}}\left|\sum_{j=0}^{k-1} Y^{(n)}_j\right|=o(1)\quad\text{in probability.}\]
Indeed, this follows immediately from Assumption \ref{scaling} with $\alpha:=\frac{1+\beta}{2}$, because
\begin{eqnarray*}
\E\left|Y_{j}^{(n)}\right|^2&\leq&
4\left(\Delta x^{(n)}\right)^2\cdot\E\left(p^{(n),A}\left[{S}^{(n)}_{j}\right]+p^{(n),B}\left[{S}^{(n)}_{j}\right]\right)
\leq 4\left(\Delta x^{(n)}\right)^2\Delta p^{(n)}\leq4C\left(\Delta t^{(n)}\right)^{1+\beta}.
\end{eqnarray*}

Next we consider the volume component:
\begin{flalign*}
\left\Vert v^{(n)}_{k}-\ol{v}^{(n)}_{k}\right\Vert_{L^2}
\leq &\left\Vert\left(\left(T_+^{(n)}\right)^{\sum_{j=0}^{k-1}\1_j^{(n),B-A}}
-\left(T_+^{(n)}\right)^{\sum_{j=0}^{k-1}p^{(n),B-A}\left[{S}^{(n)}_j\right]}\right)\left(v_{0}^{(n)}\right)\right\Vert_{L^2}\\
&+\left\Vert\Delta v^{(n)} \sum_{j=0}^{k-1}\left(T_+^{(n)}\right)^{\sum_{i=j}^{k-1}\1_i^{(n),B-A}}\left(M_{j}^{(n)}-f^{(n)}\left[{S}^{(n)}_j\right]\right)\right\Vert_{L^2}\\
&+\Delta v^{(n)}\sum_{j=0}^{k-1} \left\Vert \left(\left(T_+^{(n)}\right)^{\sum_{i=j}^{k-1}\1_i^{(n),B-A}}- \left(T_+^{(n)}\right)^{\sum_{i=j}^{k-1}p^{(n),B-A}\left[{S}^{(n)}_i\right]}\right)\left(f^{(n)}\left[{S}^{(n)}_j\right]\right)\right\Vert_{L^2}.
\end{flalign*}

Let us first deal with the second term. Due to the norm invariance of the translation operator this term equals
\[\left\Vert\Delta v^{(n)} \sum_{j=0}^{k-1} \left(T_-^{(n)}\right)^{\sum_{i=0}^{j-1}\1_i^{(n),B-A}}\left(M_{j}^{(n)}-f^{(n)}\left[{S}^{(n)}_j\right]\right)\right\Vert_{L^2}.\]
The variables
\[
X_j^{(n)}:=\Delta v^{(n)}\left(T_-^{(n)}\right)^{\sum_{i=0}^{j-1}\1_i^{(n),B-A}}\left(M_{j}^{(n)}-f^{(n)}\left[{S}^{(n)}_j\right]\right)
\]
form a triangular martingale difference array with respect to $\left(\F_j^{(n)}\right)$. As $\left|M_j^{(n)}\right|$ is bounded by $M/\Delta x^{(n)}$ according to Assumption \ref{density},
\begin{eqnarray*}
\E\left\Vert X^{(n)}_j\right\Vert_{L^2}^2&\leq&4\left(\Delta v^{(n)}\right)^2\E\left\Vert\left(T_-^{(n)}\right)^{\sum_{i=0}^{j-1}\1_i^{(n),B-A}}\left(M_{j}^{(n)}\right)\right\Vert_{L^2}^2=4\left(\Delta v^{(n)}\right)^2\E\left\Vert \left(M_j^{(n)}\right)^2 \right\Vert_{L^1}\\
&\leq& 4M\frac{\left(\Delta v^{(n)}\right)^2}{\Delta x^{(n)}}\cdot\E\left\Vert M_j^{(n)}\right\Vert_{L^1}\leq 4M^2\frac{\left(\Delta v^{(n)}\right)^2}{\Delta x^{(n)}}.
\end{eqnarray*}
Therefore, for $\tilde{\alpha}:=\frac{2-\beta}{2}>\frac{1}{2}$ we have
\[\sup_{\substack{j\leq T/\Delta t^{(n)}\\ \ n\in\N}}\left(\frac{\E\left\Vert X_j^{(n)}\right\Vert_{L^2}^2}{\left(\Delta t^{(n)}\right)^{2\tilde{\alpha}}}\right)<\infty\]
and hence by Theorem \ref{WLLN},
\[\sup_{k\leq\left\lfloor\frac{T}{\Delta t^{(n)}}\right\rfloor}\left\Vert\sum_{j=0}^{k-1} X^{(n)}_j\right\Vert_{L^2}=o(1)\quad\text{in probability}.\]

Regarding the third term note that by Lemma \ref{counting}, Remark \ref{shiftf}, and Assumption \ref{scaling},
\begin{eqnarray*}
&&\Delta v^{(n)}\sum_{j=0}^{k-1} \left\Vert \left(\left(T_+^{(n)}\right)^{\sum_{i=j}^{k-1}\1_i^{(n),B-A}}- \left(T_+^{(n)}\right)^{\sum_{i=j}^{k-1}p^{(n),B-A}\left[{S}^{(n)}_i\right]}\right)\left(f^{(n)}\left[{S}^{(n)}_j\right]\right)\right\Vert_{L^2}\\
&=&\Delta v^{(n)}\sum_{j=0}^{k-1} \left\Vert \left(\left(T_+^{(n)}\right)^{\sum_{i=j}^{k-1}\1_i^{(n),B-A}-p^{(n),B-A}\left[{S}^{(n)}_i\right]}-I\right)\left(f^{(n)}\left[{S}^{(n)}_j\right]\right)\right\Vert_{L^2}\\
&\leq&\Delta v^{(n)}\sum_{j=0}^{k-1} \left(\left|\sum_{i=j}^{k-1}\1^{(n),B-A}_i-p^{(n),B-A}\left[S^{(n)}_i\right]\right|+1\right)\left\Vert\left(T_+^{(n)}-I\right)\left(f^{(n)}\left[{S}^{(n)}_j\right]\right)\right\Vert_{L^2}\\
&\leq& C\sup_{j\leq k-1} \left(\left|\sum_{i=j}^{k-1}\1^{(n),B-A}_i-p^{(n),B-A}\left[S^{(n)}_i\right]\right|+1\right)\Delta x^{(n)}.
\end{eqnarray*}
Thus, we may conclude as above for the price component that the term converges to zero in probability uniformly in $k\leq T/\Delta t^{(n)}$. The convergence of the first term in the above decomposition follows analogously.
\end{proof}

\begin{thm}\label{T1}
Under the assumptions of Theorem \ref{limit} for all $\eps>0$,
\[\lim_{n\ra\infty}\p\left(\sup_{0\leq t\leq T}\left\Vert S^{(n)}(t)-\tilde{S}^{(n)}(t)\right\Vert_E>\eps\right)=0.\]
\end{thm}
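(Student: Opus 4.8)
The plan is to compare $S^{(n)}$ with $\tilde S^{(n)}$ by routing through the intermediate process $\ol S^{(n)}$ via the triangle inequality
\[
\left\Vert S^{(n)}(t)-\tilde S^{(n)}(t)\right\Vert_E \leq \left\Vert S^{(n)}(t)-\ol S^{(n)}(t)\right\Vert_E + \left\Vert \ol S^{(n)}(t)-\tilde S^{(n)}(t)\right\Vert_E.
\]
Theorem \ref{T0} already shows the first summand tends to zero uniformly in probability, which is precisely where the martingale fluctuations are absorbed through the weak law of large numbers. It therefore remains to control the second summand, and here the point is that $\ol S^{(n)}$ and $\tilde S^{(n)}$ obey the \emph{same} noise-free averaged recursion, differing only in the state at which the coefficients $p^{(n),B-A}$ and $f^{(n)}$ are evaluated ($S^{(n)}_j$ versus $\tilde S^{(n)}_j$). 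The idea is to write $D_k:=\left\Vert S^{(n)}_k-\tilde S^{(n)}_k\right\Vert_E$ and derive a discrete Gronwall inequality for it.

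For the bid price, Lipschitz continuity of $p^{B-A}$ (Assumption \ref{Lipschitz}) together with the scaling $\Delta x^{(n)}\Delta p^{(n)}\sim\Delta t^{(n)}$ (Assumption \ref{scaling}) gives
\[
\left|\ol B^{(n)}_k-\tilde B^{(n)}_k\right|\leq 2L\,\Delta x^{(n)}\Delta p^{(n)}\sum_{j=0}^{k-1}D_j \leq C\,\Delta t^{(n)}\sum_{j=0}^{k-1}D_j.
\]
For the volume component I would split $\ol v^{(n)}_k-\tilde v^{(n)}_k$ into three pieces: the difference of the two iterated translations applied to $v_0^{(n)}$; the difference of the two iterated translations applied to $f^{(n)}[S^{(n)}_j]$; and the term $\Delta v^{(n)}\sum_j (T_+^{(n)})^{\sum_i p^{(n),B-A}[\tilde S^{(n)}_i]}\left(f^{(n)}[S^{(n)}_j]-f^{(n)}[\tilde S^{(n)}_j]\right)$. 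In the first two pieces the two translation exponents differ exactly by $\sum_i\left(p^{(n),B-A}[S^{(n)}_i]-p^{(n),B-A}[\tilde S^{(n)}_i]\right)$, so Lemma \ref{counting} (applicable since $v_0^{(n)}$ and each $f^{(n)}[s]$ are step functions on the grid) bounds them by $\left(2L\Delta p^{(n)}\sum_i D_i+1\right)$ times the single translation increment, which is $\mathcal{O}(\Delta x^{(n)})$ by Remark \ref{initialshift} and Remark \ref{shiftf}. The third piece is handled by the isometry of $T_+^{(n)}$ and the Lipschitz property of $f^{(n)}$, yielding $L\Delta v^{(n)}\sum_{j<k}D_j$. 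Using $\Delta v^{(n)}\sim\Delta t^{(n)}$ and $k\leq T/\Delta t^{(n)}$, the ``$+1$'' contributions collapse to a term of order $\Delta x^{(n)}$ and the double sums reduce to single sums carrying a factor $\Delta t^{(n)}$, so that
\[
\left\Vert \ol S^{(n)}_k-\tilde S^{(n)}_k\right\Vert_E\leq C\,\Delta x^{(n)}+C\,\Delta t^{(n)}\sum_{j=0}^{k-1}D_j.
\]

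Combining this with Theorem \ref{T0} and writing $R^{(n)}:=\sup_{k}\left\Vert S^{(n)}_k-\ol S^{(n)}_k\right\Vert_E$ (which is $o(1)$ in probability), the triangle inequality produces the pathwise recursion
\[
D_k\leq R^{(n)}+C\,\Delta x^{(n)}+C\,\Delta t^{(n)}\sum_{j=0}^{k-1}D_j.
\]
Since $k\,\Delta t^{(n)}\leq T$, the discrete version of Gronwall's lemma gives $\sup_k D_k\leq \left(R^{(n)}+C\,\Delta x^{(n)}\right)e^{CT}$ pathwise, whence $\sup_{0\leq t\leq T}\left\Vert S^{(n)}(t)-\tilde S^{(n)}(t)\right\Vert_E\to 0$ in probability because $R^{(n)}\to 0$ in probability and $\Delta x^{(n)}\to 0$. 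The main obstacle is the bookkeeping in the volume component: one must invoke Lemma \ref{counting} so that the discrepancy between the two \emph{random} translation exponents enters only through the coefficient difference (multiplied by the $\mathcal{O}(\Delta x^{(n)})$ translation increment), ensuring after rescaling that every contribution is either $\mathcal{O}(\Delta x^{(n)})$ or of the Gronwall-compatible form $C\,\Delta t^{(n)}\sum_{j<k}D_j$, with no stray term that fails to vanish.
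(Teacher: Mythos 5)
Your proposal is correct and follows essentially the same route as the paper's proof: the triangle inequality through $\ol{S}^{(n)}$, Theorem \ref{T0} for the martingale fluctuations, the identical three-term decomposition of the volume difference handled via Lemma \ref{counting}, the isometry of $T_+^{(n)}$, and the Lipschitz assumptions, followed by the discrete Gronwall lemma. No substantive differences.
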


\begin{proof}
We have 
\[\left\Vert S^{(n)}\left(t^{(n)}_k\right)-\tilde{S}^{(n)}\left(t^{(n)}_k\right)\right\Vert_E\leq\left\Vert S^{(n)}\left(t^{(n)}_k\right)-\ol{S}^{(n)}\left(t^{(n)}_k\right)\right\Vert_E+\left\Vert\ol{S}^{(n)}\left(t^{(n)}_k\right)-\tilde{S}^{(n)}\left(t^{(n)}_k\right)\right\Vert_E.\]
According to Theorem \ref{T0} the first term converges uniformly to zero in probability. In the following we use the Lipschitz continuity of the $p^I$s and $f^{(n)}$s formulated in Assumption \ref{Lipschitz} to derive an appropriate upper bound for the second term, which allows us to apply the discrete Gronwall lemma in order to prove the assertion. We start again with the bid price component:
\[\left|\ol{B}^{(n)}_{k}-\tilde{B}^{(n)}_{k}\right|=\Delta x^{(n)}\left|\sum_{j=0}^{k-1} p^{(n),B-A}\left[S_j^{(n)}\right]-p^{(n),B-A}\left[\tilde{S}_j^{(n)}\right]\right|
\leq\Delta x^{(n)}\Delta p^{(n)} 2L \sum_{j=0}^{k-1}\left\Vert S^{(n)}_j-\tilde{S}^{(n)}_j\right\Vert_E.\]

Moreover, using Lemma \ref{counting} and Assumption \ref{Lipschitz},
\begin{eqnarray*}
\Delta v^{(n)} \sum_{j=0}^{k-1}\left\Vert \left(T_+^{(n)}\right)^{\sum_{i=j}^{k-1}p^{(n),B-A}\left[\tilde{S}^{(n)}_j\right]}\left(f^{(n)}\left[\tilde{S}^{(n)}_j\right]-f^{(n)}\left[{S}^{(n)}_j\right]\right)\right\Vert_{L^2}\leq\Delta v^{(n)} L\sum_{j=0}^{k-1}\left\Vert \tilde{S}^{(n)}_j-S^{(n)}_j\right\Vert_E.
\end{eqnarray*}

Similarly to the proof of Theorem \ref{T0}, using Lemma \ref{counting} and Remark \ref{shiftf} we can derive the inequality
\begin{eqnarray*}
&&\left\Vert \left(\left(T_+^{(n)}\right)^{\sum_{i=j}^{k-1}p^{(n),B-A}\left[\tilde{S}^{(n)}_i\right]}-\left(T_+^{(n)}\right)^{\sum_{i=j}^{k-1}p^{(n),B-A}\left[{S}^{(n)}_i\right]}\right)\left(f^{(n)}\left[{S}^{(n)}_j\right]\right)\right\Vert_{L^2}\\
&\leq&C\left(\left|\sum_{i=j}^{k-1}p^{(n),B-A}\left[\tilde{S}^{(n)}_i\right]-p^{(n),B-A}\left[S^{(n)}_i\right]\right|+1\right)\Delta x^{(n)}\\
&\leq&C\Delta x^{(n)}\Delta p^{(n)} 2L \sum_{i=0}^{k-1}\left\Vert S^{(n)}_i-\tilde{S}^{(n)}_i\right\Vert_E+C\Delta x^{(n)}.
\end{eqnarray*}
Relying on Assumption \ref{initial} instead of Assumption \ref{Lipschitz} we may replace $f^{(n)}\left[{S}^{(n)}_j\right]$ by $v^{(n)}_{0}$ in the above computations to get a similar estimate for the initial volume term.

Finally, putting everything together we have by Assumption \ref{scaling},
\begin{align*}
&\left\Vert\tilde{v}^{(n)}_{k}-\ol{v}^{(n)}_{k}\right\Vert_{L^2}\leq\left\Vert\left(\left(T_+^{(n)}\right)^{\sum_{j=0}^{k-1}p^{(n),B-A}\left[\tilde{S}^{(n)}_j\right]}-\left(T_+^{(n)}\right)^{\sum_{j=0}^{k-1}p^{(n),B-A}\left[{S}^{(n)}_j\right]}\right)\left(v_{0}^{(n)}\right)\right\Vert_{L^2}\\
&+ \Delta v^{(n)} \sum_{j=0}^{k-1}\left\Vert\left(T_+^{(n)}\right)^{\sum_{i=j}^{k-1}p^{(n),B-A}\left[\tilde{S}^{(n)}_i\right]}\left(f^{(n)}\left[\tilde{S}^{(n)}_j\right]-f^{(n)}\left[{S}^{(n)}_j\right]\right)\right\Vert_{L^2}\\
&+\Delta v^{(n)}\sum_{j=0}^{k-1} \left\Vert \left(\left(T_+^{(n)}\right)^{\sum_{i=j}^{k-1}p^{(n),B-A}\left[\tilde{S}^{(n)}_i\right]}- \left(T_+^{(n)}\right)^{\sum_{i=j}^{k-1}p^{(n),B-A}\left[{S}^{(n)}_i\right]}\right)\left(f^{(n)}\left[{S}^{(n)}_j\right]\right)\right\Vert_{L^2}\\
&\leq C\left(\Delta t^{(n)}\sum_{l=0}^{k-1}\left\Vert \tilde{S}^{(n)}_l-S^{(n)}_l\right\Vert_E+\Delta x^{(n)}+\Delta t^{(n)}\sum_{j=0}^{k-1}\left[\Delta t^{(n)}\sum_{l=0}^{k-1}\left\Vert \tilde{S}^{(n)}_l-S^{(n)}_l\right\Vert_E+\Delta x^{(n)}\right]\right)\\
&\leq C\Delta t^{(n)}\sum_{l=0}^{k-1}\left\Vert \tilde{S}^{(n)}_l-S^{(n)}_l\right\Vert_E+C\Delta x^{(n)}.
\end{align*}
Therefore, for some sequence $(a_n)$ converging to zero in probability and for all $k\leq \frac{T}{\Delta t^{(n)}}$ we get the following uniform estimate by means of the discrete Gronwall Lemma \ref{Gronwall}:
\begin{eqnarray*}
\left\Vert S^{(n)}_{k}-\tilde{S}_{k}^{(n)}\right\Vert_E&\leq&\left\Vert S^{(n)}_{k}-\ol{S}_{k}^{(n)}\right\Vert_E+\left\Vert\ol{S}^{(n)}_{k}-\tilde{S}_{k}^{(n)}\right\Vert_E\leq
 a_n+C\Delta t^{(n)}\sum_{l=0}^{k-1}\left\Vert \tilde{S}^{(n)}_l-S^{(n)}_l\right\Vert_E\\
&\leq& a_n+a_nC\Delta t^{(n)}\sum_{l=0}^{k-1}e^{\sum_{m=l+1}^{k-1}C\Delta t^{(n)}}\leq C a_n.
\end{eqnarray*}
\end{proof}

\subs{Almost uniform iteration to the discrete approximation}\label{fixd}

In this section we approximate for each $n\in\N$ the model $\tilde{S}^{(n)}$ iteratively by an $\tilde{E}$-valued sequence $\left(\tilde{S}^{(n),m}\right)_m$ of limit order book models. To this end, we define for each $n\in\N$ a function $F^{(n)}:\ \tilde{E}\ra\tilde{E}$ via 
\[F^{(n)}:\ g\mapsto G^{(n)}=\left(G^{(n)}_B,G^{(n)}_v\right)\quad \text{with}\quad G^{(n)}(t):=G^{(n)}_j,\ \text{if }t\in\left[t^{(n)}_j,t^{(n)}_{j+1}\right),\]
where $G^{(n)}_0:=s_0^{(n)}$ and for $k\in\N$,
\[G^{(n)}_{B,k}:=G^{(n)}_{B,k-1}+\Delta x^{(n)}p^{(n),B-A}\left[g^{(n)}_{k-1}\right]\]
as well as
\begin{align*}
G^{(n)}_{v,k}:=\left(T_+^{(n)}\right)^{\sum_{j=0}^{k-1}p^{(n),B-A}\left[g^{(n)}_j\right]}\left(v_{0}^{(n)}\right)+\Delta v^{(n)} \sum_{j=0}^{k-1}\left(T_+^{(n)}\right)^{\sum_{i=j}^{k-1}p^{(n),B-A}\left[g^{(n)}_i\right]}\left(f^{(n)}\left[g^{(n)}_j\right]\right)
\end{align*}
with
\[g_j^{(n)}:=g\left(t^{(n)}_j\right).\]
We will write $F^{(n)}_{k}(g):=G^{(n)}_k$ for $G^{(n)}_k$ defined as above in what follows.\\ 

Using analogous arguments as in the proof of Theorem \ref{T1} one can find a constant $K>0$ such that
\[\left\Vert F_{k+1}^{(n)}(g)-F_{k+1}^{(n)}(\tilde{g})\right\Vert_E\leq K\Delta v^{(n)}\sum_{j=0}^k\left\Vert g_j-\tilde{g}_j\right\Vert_E+K\Delta x^{(n)}\quad\text{for all }k\leq \frac{T}{\Delta t^{(n)}},\ n\in\N.\]
Now as in Section \ref{fixc} we define a weighted norm on $\tilde{E}$ via
\[\left\Vert g\right\Vert_{**}:=\sup_{0\leq t\leq T}e^{-3Kt}\left\Vert g(t)\right\Vert_E,\]
which allows us to get an estimate for the weighted norm with Lipschitz constant less than $1$ up to an error of order $\Delta x^{(n)}$. As for all $n\in\N$,
\begin{eqnarray*}
\left\Vert F_{k+1}^{(n)}(g)-F_{k+1}^{(n)}(\tilde{g})\right\Vert_E&\leq& K\Delta v^{(n)}\sum_{j=0}^k e^{3K t^{(n)}_j}\left\Vert g-\tilde{g}\right\Vert_{**}+K\Delta x^{(n)}\\
&=&K\Delta v^{(n)}\cdot\frac{e^{3K(k+1)\Delta t^{(n)}}-1}{e^{3K\Delta t^{(n)}}-1}\cdot\left\Vert g-\tilde{g}\right\Vert_{**}+K\Delta x^{(n)},
\end{eqnarray*}
there exists by Assumption \ref{scaling} an $N_0\in\N$ such that for all $n\geq N_0$,
\[e^{-3K t^{(n)}_{k+1}}\left\Vert F_{k+1}^{(n)}(g)-F_{k+1}^{(n)}(\tilde{g})\right\Vert_E\leq\frac{K\Delta v^{(n)}}{e^{3K\Delta t^{(n)}}-1}\left\Vert g-\tilde{g}\right\Vert_{**}+K\Delta x^{(n)}\leq\frac{1}{2}\left\Vert g-\tilde{g}\right\Vert_{**}+K\Delta x^{(n)}\]
and therefore indeed
\[\left\Vert F^{(n)}(g)-F^{(n)}(\tilde{g})\right\Vert_{**}\leq\frac{1}{2}\left\Vert g-\tilde{g}\right\Vert_{**}+K\Delta x^{(n)}\qquad\forall\ n\geq N_0.\] 
W.l.o.g.~we take $N_0=1$ in the following. For each $n,m\in\N_0$ we define a new discrete time model $\tilde{S}^{(n),m}$ via $\tilde{S}^{(n),0}(t)\equiv s_0^{(n)}$ as well as 
\[\tilde{S}^{(n),m+1}:=F^{(n)}\left(\tilde{S}^{(n),m}\right).\]

\begin{thm}\label{T2}
\[\lim_{\substack{m\ra\infty\\n\ra\infty}}\sup_{0\leq t\leq T}\left\Vert\tilde{S}^{(n),m}(t)-\tilde{S}^{(n)}(t)\right\Vert_E=0.\]
\end{thm}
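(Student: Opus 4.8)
The plan is to recognize that $\tilde{S}^{(n)}$ is a fixed point of the map $F^{(n)}$ and then to propagate the ``contraction up to an error of order $\Delta x^{(n)}$'' estimate just derived through the iteration.

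First I would verify that $F^{(n)}\bigl(\tilde{S}^{(n)}\bigr)=\tilde{S}^{(n)}$ by direct comparison of the definitions: telescoping the recursion for $G^{(n)}_{B,k}$ gives $G^{(n)}_{B,k}=B_0^{(n)}+\Delta x^{(n)}\sum_{j=0}^{k-1}p^{(n),B-A}\bigl[g_j^{(n)}\bigr]$, and substituting $g=\tilde{S}^{(n)}$ (so that $g_j^{(n)}=\tilde{S}^{(n)}_j$) reproduces exactly the defining formulas for $\tilde{B}^{(n)}_k$ and $\tilde{v}^{(n)}_k$. I would also note that $\tilde{S}^{(n)}\in\tilde{E}$ by the very same a priori bound that produced $\ol{K}$ (using $|p^{B-A}|\leq 1$, $\sup_s\|f^{(n)}[s]\|_{L^2}<\infty$, and Assumption \ref{scaling}), so all objects live in the Banach space on which the estimate was derived; likewise $\tilde{S}^{(n),m}\in\tilde{E}$ for every $m$, since $\tilde{S}^{(n),0}\equiv s_0^{(n)}\in\tilde{E}$ and $F^{(n)}$ maps $\tilde{E}$ into itself.

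Next I would set $a_m^{(n)}:=\bigl\Vert\tilde{S}^{(n),m}-\tilde{S}^{(n)}\bigr\Vert_{**}$ and apply the displayed estimate with $g=\tilde{S}^{(n),m}$ and $\tilde{g}=\tilde{S}^{(n)}$. Since $\tilde{S}^{(n),m+1}=F^{(n)}\bigl(\tilde{S}^{(n),m}\bigr)$ and $\tilde{S}^{(n)}=F^{(n)}\bigl(\tilde{S}^{(n)}\bigr)$, this yields the scalar recursion $a_{m+1}^{(n)}\leq\frac12 a_m^{(n)}+K\Delta x^{(n)}$, which I would unwind to
\[a_m^{(n)}\leq 2^{-m}a_0^{(n)}+K\Delta x^{(n)}\sum_{i=0}^{m-1}2^{-i}\leq 2^{-m}a_0^{(n)}+2K\Delta x^{(n)}.\]
Here $a_0^{(n)}=\bigl\Vert s_0^{(n)}-\tilde{S}^{(n)}\bigr\Vert_{**}\leq\bigl\Vert s_0^{(n)}\bigr\Vert_E+\sup_{t}\bigl\Vert\tilde{S}^{(n)}(t)\bigr\Vert_E\leq 2\ol{K}$ uniformly in $n$. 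Passing from the weighted norm $\Vert\cdot\Vert_{**}$ back to the unweighted supremum norm via their equivalence ($\sup_t\Vert\cdot\Vert_E\leq e^{3KT}\Vert\cdot\Vert_{**}$) then gives
\[\sup_{0\leq t\leq T}\bigl\Vert\tilde{S}^{(n),m}(t)-\tilde{S}^{(n)}(t)\bigr\Vert_E\leq e^{3KT}\bigl(2^{-m}\cdot 2\ol{K}+2K\Delta x^{(n)}\bigr),\]
and letting $m\ra\infty$ and $n\ra\infty$ annihilates the first term (geometric decay in $m$) and the second (since $\Delta x^{(n)}\ra 0$ by Assumption \ref{scaling}), which is the claim.

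The only genuine subtlety, and the step I would treat most carefully, is that $F^{(n)}$ is \emph{not} a true contraction: the additive error $K\Delta x^{(n)}$ forbids a direct appeal to Banach's fixed point theorem (indeed fixed points need not even be unique). One must therefore keep this error explicit throughout the iteration, and it is essential that the limit be taken jointly in $m$ and $n$: the index $m$ forces the geometric term $2^{-m}a_0^{(n)}$ to vanish, while the index $n$ (through $\Delta x^{(n)}\ra 0$) forces the accumulated error $2K\Delta x^{(n)}$ to vanish. Neither limit alone would suffice.
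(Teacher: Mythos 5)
Your proposal is correct and follows essentially the same route as the paper: identify $\tilde{S}^{(n)}$ as a fixed point of $F^{(n)}$, iterate the ``contraction up to an $\mathcal{O}(\Delta x^{(n)})$ error'' estimate in the weighted norm $\Vert\cdot\Vert_{**}$ to obtain the bound $2^{-m}\cdot 2\ol{K}+2K\Delta x^{(n)}$, and pass back to the supremum norm by equivalence. Your remark that the additive error precludes a direct appeal to Banach's fixed point theorem and necessitates the joint limit in $m$ and $n$ is exactly the point of the paper's formulation, so nothing is missing.
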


\begin{proof}
First, note that by definition $\tilde{S}^{(n)}$ is a fixed point of $F^{(n)}$, i.e.
\[F^{(n),m}\left(\tilde{S}^{(n)}\right):=\underbrace{F^{(n)}\circ\dots\circ F^{(n)}}_{m\text{ times}}\left(\tilde{S}^{(n)}\right)=\tilde{S}^{(n)}\quad\forall\ m\in\N. \]
Hence, making use of the above computations we deduce that for every $m,n\in\N$,
\begin{eqnarray*}
\left\Vert \tilde{S}^{(n),m}-\tilde{S}^{(n)}\right\Vert_{**}&=&\left\Vert F^{(n),m}\left(\tilde{S}^{(n),0}\right)-F^{(n),m}\left(\tilde{S}^{(n)}\right)\right\Vert_{**}\\
&\leq&\left(\frac{1}{2}\right)^{m}\left\Vert\tilde{S}^{(n),0}-\tilde{S}^{(n)}\right\Vert_{**}+\sum_{k=0}^{m-1}\left(\frac{1}{2}\right)^kK\Delta x^{(n)}\\
&\leq&\left(\frac{1}{2}\right)^{m}\sup_{0\leq t\leq T}\left\Vert s_0^{(n)}-\tilde{S}^{(n)}(t)\right\Vert_E+2K\Delta x^{(n)}\\
&\leq&\left(\frac{1}{2}\right)^{m-1}\ol{K}+2K\Delta x^{(n)}.
\end{eqnarray*}
Now the result follows from the equivalence of the weighted norm and the norm $\sup_{0\leq t\leq T}\left\Vert\cdot\right\Vert_E$.
\end{proof}

Especially, Theorem \ref{T2} implies that
\[\lim_{m\ra\infty}\lim_{n\ra\infty}\sup_{0\leq t\leq T}\left\Vert\tilde{S}^{(n),m}(t)-\tilde{S}^{(n)}(t)\right\Vert_E=\lim_{n\ra\infty}\lim_{m\ra\infty}\sup_{0\leq t\leq T}\left\Vert\tilde{S}^{(n),m}(t)-\tilde{S}^{(n)}(t)\right\Vert_E=0.\]

\subs{Convergence of the discrete iteration to the continuous iteration}

The goal of this section is to prove the following result.

\begin{thm}\label{T4} For all $m\in\N$,
\[\lim_{n\ra\infty}\sup_{0\leq t\leq T}\left\Vert \tilde{S}^{(n),m}(t)-\hat{S}^m(t)\right\Vert_E=0.\]
\end{thm}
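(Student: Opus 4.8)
The plan is to proceed by induction on $m$, using that $\tilde{S}^{(n),m}$ and $\hat{S}^m$ are obtained by iterating the discrete map $F^{(n)}$ and the continuous map $F$ from the respective initial data $s_0^{(n)}$ and $s_0$. For $m=0$ one has $\tilde{S}^{(n),0}\equiv s_0^{(n)}$ and $\hat{S}^0\equiv s_0$, so
\[\sup_{0\leq t\leq T}\|\tilde{S}^{(n),0}(t)-\hat{S}^0(t)\|_E=|B_0^{(n)}-B_0|+\|v_0^{(n)}-v_0\|_{L^2}\longrightarrow0\]
by Assumption \ref{initial}. Moreover, since $\hat{S}^m$ is Lipschitz in time with constant $\hat{L}$ by Lemma \ref{Lip}, while $\tilde{S}^{(n),m}$ is constant on every $[t_k^{(n)},t_{k+1}^{(n)})$,
\[\sup_{0\leq t\leq T}\|\tilde{S}^{(n),m}(t)-\hat{S}^m(t)\|_E\leq\max_{k\leq T/\Delta t^{(n)}}\|\tilde{S}^{(n),m}(t_k^{(n)})-\hat{S}^m(t_k^{(n)})\|_E+\hat{L}\Delta t^{(n)},\]
so it suffices to control the differences at the grid points $t_k^{(n)}$, the last term being negligible.

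For the inductive step I would assume the statement for $m$ and split, at each grid point,
\[\|\tilde{S}^{(n),m+1}(t_k^{(n)})-\hat{S}^{m+1}(t_k^{(n)})\|_E\leq\|F^{(n)}_k(\tilde{S}^{(n),m})-F^{(n)}_k(\hat{S}^m)\|_E+\|F^{(n)}_k(\hat{S}^m)-F(\hat{S}^m)(t_k^{(n)})\|_E,\]
using $\tilde{S}^{(n),m+1}=F^{(n)}(\tilde{S}^{(n),m})$ and $\hat{S}^{m+1}=F(\hat{S}^m)$. The first term is controlled by the approximate Lipschitz estimate for $F^{(n)}$ derived in Section \ref{fixd}, namely $\|F^{(n)}_{k+1}(g)-F^{(n)}_{k+1}(\tilde{g})\|_E\leq K\Delta v^{(n)}\sum_{j=0}^k\|g_j-\tilde{g}_j\|_E+K\Delta x^{(n)}$: taking $g=\tilde{S}^{(n),m}$, $\tilde{g}=\hat{S}^m$ and writing $\delta_n:=\sup_t\|\tilde{S}^{(n),m}(t)-\hat{S}^m(t)\|_E$, the induction hypothesis gives $\delta_n\to0$, and since $\Delta v^{(n)}\lfloor T/\Delta t^{(n)}\rfloor\to T$ by Assumption \ref{scaling}, the first term is at most $KT(1+o(1))\delta_n+K\Delta x^{(n)}\to0$, uniformly in $k$.

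The heart of the argument is the consistency term $\|F^{(n)}_k(\hat{S}^m)-F(\hat{S}^m)(t_k^{(n)})\|_E$, which quantifies how faithfully the discrete map reproduces the continuous one on the fixed, regular input $\hat{S}^m$. I would first record that $\hat{S}^m\in\tilde{E}'$: this holds for $\hat{S}^0=s_0$ because $v_0\in C^1$, and is inherited through the iteration by Lemma \ref{dif}; hence $\hat{v}^m$ has uniformly bounded spatial derivative and all the functions in play are supported in a fixed compact interval. Expanding $F^{(n)}_k(\hat{S}^m)$ via $(T_+^{(n)})^a(f)(\cdot)=f(\cdot+a\Delta x^{(n)})$ and $p^{(n),B-A}=\Delta p^{(n)}p^{B-A}$, the bid component equals $B_0^{(n)}+\Delta x^{(n)}\Delta p^{(n)}\sum_{j=0}^{k-1}p^{B-A}[\hat{S}^m(t_j^{(n)})]$, which differs from $B_0+\int_0^{t_k^{(n)}}p^{B-A}[\hat{S}^m(s)]\,ds$ only through $B_0^{(n)}-B_0$, the scaling defect $\Delta x^{(n)}\Delta p^{(n)}/\Delta t^{(n)}-1$, and a Riemann-sum error that is $\mathcal{O}(\Delta t^{(n)})$ since $s\mapsto p^{B-A}[\hat{S}^m(s)]$ is Lipschitz (Assumption \ref{Lipschitz} and Lemma \ref{Lip}). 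The volume component is treated the same way: the discrete exponents $\Delta x^{(n)}\Delta p^{(n)}\sum_ip^{B-A}[\hat{S}^m(t_i^{(n)})]$ approximate the characteristic shifts $\int_s^{t_k^{(n)}}p^{B-A}[\hat{S}^m(u)]\,du$ up to $\mathcal{O}(\Delta t^{(n)})$, replacing $(v_0^{(n)},f^{(n)})$ by $(v_0,f)$ costs $\mathcal{O}(\Delta x^{(n)})$ by Assumptions \ref{initial} and \ref{Lipschitz}, the resulting argument shifts are absorbed by the mean value theorem thanks to the uniform bounds on $v_0'$ and $(f[s])'$, and the outer sum in $j$ converges to the $ds$-integral because its integrand is Lipschitz in $s$ in the $L^2$-norm — exactly the computation used to prove $F$ Lipschitz in Section \ref{fixc}. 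All these errors are $\mathcal{O}(\Delta x^{(n)}+\Delta t^{(n)})$ plus scaling defects, uniformly in $k$, hence vanish.

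Combining the two bounds gives $\max_k\|\tilde{S}^{(n),m+1}(t_k^{(n)})-\hat{S}^{m+1}(t_k^{(n)})\|_E\to0$, which with the grid-point reduction closes the induction and proves the claim for every $m$. I expect the consistency term to be the main obstacle: it requires matching, simultaneously and uniformly in the time index, the Riemann sums to their integrals, the iterated discrete translations to the continuous characteristic shifts, and the prelimit data $(v_0^{(n)},f^{(n)})$ to their limits $(v_0,f)$, and it is precisely the time-Lipschitz and spatial-$C^1$ regularity of $\hat{S}^m$ furnished by Lemmas \ref{dif} and \ref{Lip} that keeps each of these discretization errors $o(1)$.
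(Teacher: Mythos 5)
Your proposal is correct and reaches the result with the same basic ingredients as the paper (induction on $m$; Riemann-sum convergence driven by the time-Lipschitz continuity of $\hat{S}^m$ from Lemma \ref{Lip}; Lemma \ref{counting} and the mean value theorem with the $C^1$ bounds of Lemma \ref{dif} to compare iterated discrete translations with the continuous characteristic shifts; Assumptions \ref{initial} and \ref{Lipschitz} to swap $(v_0^{(n)},f^{(n)})$ for $(v_0,f)$), but you organize them differently. You split $\tilde{S}^{(n),m+1}-\hat{S}^{m+1}$ into a \emph{stability} term $F^{(n)}(\tilde{S}^{(n),m})-F^{(n)}(\hat{S}^m)$, disposed of by the approximate Lipschitz bound for $F^{(n)}$ already established in Section \ref{fixd} together with the induction hypothesis, plus a \emph{consistency} term $F^{(n)}(\hat{S}^m)-F(\hat{S}^m)$ that involves only the regular input $\hat{S}^m$. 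The paper instead estimates the difference directly from the definitions, keeping $\tilde{S}^{(n),m}$ in the translation exponents throughout and invoking the induction hypothesis at several separate points (the price Riemann sum, the exponent comparison in the second volume step, and the replacement of $f^{(n)}[\tilde{S}^{(n),m}_j]$ in the third). Your decomposition is the standard ``stability $+$ consistency $\Rightarrow$ convergence'' paradigm; it is arguably more modular, reuses the contraction-type estimate of Section \ref{fixd} rather than re-deriving analogous bounds, and confines the induction hypothesis to a single place. One small point of care in the consistency term: since only $f[s]$ (not $f^{(n)}[s]$) is assumed continuously differentiable, you must first replace $f^{(n)}[\hat{S}^m(t_j^{(n)})]$ by $f[\hat{S}^m(t_j^{(n)})]$ under the (isometric) translation, and likewise $v_0^{(n)}$ by $v_0$ as in Remark \ref{initialshift}, \emph{before} applying the mean value theorem to absorb the shift discrepancies; you list both ingredients, so this is a matter of ordering the substitutions rather than a gap.
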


\begin{proof}
To prove Theorem \ref{T4} we proceed by induction. Obviously, for $m=0$ the claim holds by Assumption \ref{initial}. Now assume that the claim holds for $m$ and consider the $(m+1)$-th iteration: 
First, we show the convergence of the bid price process. Writing the integral as a limit of Riemann sums we have by Assumptions \ref{initial} and \ref{scaling},
\begin{eqnarray*}
\hat{B}^{m+1}(t)=\lim_{n\ra\infty}B_0^{(n)}+\lim_{n\ra\infty}\sum_{j=0}^{\left\lfloor t/\Delta t^{(n)}\right\rfloor}\Delta x^{(n)} \Delta p^{(n)}p^{B-A}\left[\hat{S}^{m}\left(t^{(n)}_j\right)\right].
\end{eqnarray*}
As $p^{B-A}$ and $\hat{S}^m$ are both Lipschitz continuous by Assumption \ref{Lipschitz} and Lemma \ref{Lip}, this convergence is uniform in $t\in[0,T]$. Moreover by Assumption \ref{Lipschitz},
\begin{eqnarray*}
&&\sup_{0\leq t\leq T}\left|\sum_{j=0}^{\left\lfloor t/\Delta t^{(n)}\right\rfloor}\Delta x^{(n)} \Delta p^{(n)}\left(p^{B-A}\left[\hat{S}^{m}\left(t^{(n)}_j\right)\right]-p^{B-A}\left[\tilde{S}^{(n),m}_j\right]\right)\right|\\
&\leq& T\cdot\frac{\Delta x^{(n)} \Delta p^{(n)}}{\Delta t^{(n)}}\cdot\sup_{j\leq\frac{T}{\Delta t^{(n)}}}\left|p^{B-A}\left[\hat{S}^{m}\left(t^{(n)}_j\right)\right]-p^{B-A}\left[\tilde{S}^{(n),m}_j\right]\right|\\
&\leq& 2LT\cdot\frac{\Delta x^{(n)} \Delta p^{(n)}}{\Delta t^{(n)}}\cdot\sup_{0\leq t\leq T}\left\Vert\hat{S}^{m}(t)-\tilde{S}^{(n),m}(t)\right\Vert_E,
\end{eqnarray*}
which converges towards zero as $n\ra\infty$ by Assumption \ref{scaling} and the induction hypothesis.

We now show the convergence of the buy side volume density function step by step. W.l.o.g.~we only prove the convergence of the order placement / cancelation term. The convergence of the term involving the initial volume density function  follows by analogous arguments. First note that a pointwise Riemann sum approximation gives
\begin{align*}
&\int_0^t f\left[\hat{S}^{m}(s)\right]\left(x+\int_s^tp^{B-A}\left[\hat{S}^{m}(u)\right]du\right)ds\\
&\qquad=\lim_{n\ra\infty}\Delta t^{(n)}\sum_{j=0}^{\lfloor t/\Delta t^{(n)}\rfloor}f\left[\hat{S}^{m}\left(t^{(n)}_j\right)\right]\left(x+\int_{t^{(n)}_j}^tp^{B-A}\left[\hat{S}^{m}(u)\right]du\right).
\end{align*}
To show that the convergence also holds in $L^2$ observe that for all $\ul{t}\leq\ol{t}\in[0,T]$ with $|\ol{t}-\ul{t}|\leq\Delta t^{(n)}$, making use of Assumption \ref{Lipschitz}, Lemma \ref{Lip}, and the mean value theorem,
\begin{flalign*}
&\left\Vert f\left[\hat{S}^{m}\left(\ol{t}\right)\right]\left(\cdot+\int_{\ol{t}}^tp^{B-A}\left[\hat{S}^{m}(u)\right]du\right)-f\left[\hat{S}^{m}\left(\ul{t}\right)\right]\left(\cdot+\int_{\ul{t}}^tp^{B-A}\left[\hat{S}^{m}(u)\right]du\right)\right\Vert_{L^2}\\
&\leq\left\Vert f\left[\hat{S}^{m}\left(\ol{t}\right)\right]-f\left[\hat{S}^{m}\left(\ul{t}\right)\right]\right\Vert_{L^2}+C\int_{\ul{t}}^{\ol{t}}\left|p^{B-A}\left[\hat{S}^{m}(u)\right]\right|du\leq L\hat{L}|\ol{t}-\ul{t}|+C|\ol{t}-\ul{t}|\leq C\Delta t^{(n)}.
\end{flalign*}
Therefore, the above convergence does indeed hold in $L^2$, uniformly in $t\in[0,T]$ .

Second, by similar arguments\small
\begin{flalign*}
&\Delta t^{(n)}\sum_{j=0}^{\lfloor t/\Delta t^{(n)}\rfloor}\left\Vert \left(T_+^{(n)}\right)^{\sum_{i=j}^{\left\lfloor t/\Delta t^{(n)}\right\rfloor}p^{(n),B-A}\left[\tilde{S}^{(n),m}_i\right]}\left(f\left[\hat{S}^{m}\left(t^{(n)}_j\right)\right]\right)-f\left[\hat{S}^{m}\left(t^{(n)}_j\right)\right]\left(\cdot+\int_{t^{(n)}_j}^tp^{B-A}\left[\hat{S}^{m}(u)\right]du\right)\right\Vert_{L^2}\\
&\leq\sup_{j\leq t/\Delta t^{(n)}}\left\Vert \sup_{z\in\R}\left|\left(f\left[\hat{S}^m\left(t_j^{(n)}\right)\right]\right)'(z)\right|\1_{[-M,M]}(\cdot)\left(\Delta x^{(n)}\sum_{i=j}^{\left\lfloor t/\Delta t^{(n)}\right\rfloor}p^{(n),B-A}\left[\tilde{S}^{(n),m}_i\right]-\int_{t_j^{(n)}}^tp^{B-A}\left[\hat{S}^{m}(u)\right]du\right)\right\Vert_{L^2}\\
&\leq C\sup_{j\leq t/\Delta t^{(n)}}\left|\Delta x^{(n)}\Delta p^{(n)}\sum_{i=j}^{\left\lfloor t/\Delta t^{(n)}\right\rfloor}p^{B-A}\left[\tilde{S}^{(n),m}_i\right]-\int_{t_j^{(n)}}^tp^{B-A}\left[\hat{S}^{m}(u)\right]du\right|
\end{flalign*}\normalsize
and as for the price component this term converges to zero uniformly in $t\in[0,T]$. 

Third, note that by Lemma \ref{counting} and Assumption \ref{Lipschitz} we have
\begin{flalign*}
&\Delta t^{(n)}\sum_{j=0}^{\lfloor t/\Delta t^{(n)}\rfloor}\left\Vert\left(T_+^{(n)}\right)^{\sum_{i=j}^{\left\lfloor t/\Delta t^{(n)}\right\rfloor}p^{(n),B-A}\left[\tilde{S}^{(n),m}_i\right]}\left(f\left[\hat{S}^{m}\left(t^{(n)}_j\right)\right]-f^{(n)}\left[\tilde{S}^{(n),m}_j\right]\right)\right\Vert_{L^2}\\
&\leq T \sup_{j\leq t/\Delta t^{(n)}}\left\{\left\Vert f\left[\hat{S}^{m}\left(t^{(n)}_j\right)\right]-f^{(n)}\left[\hat{S}^{m}\left(t^{(n)}_j\right)\right]\right\Vert_{L^2}+\left\Vert f^{(n)}\left[\hat{S}^{m}\left(t^{(n)}_j\right)\right]-f^{(n)}\left[\tilde{S}^{(n),m}_j\right]\right\Vert_{L^2}\right\}\\
&\leq o(1)+L \sup_{j\leq T/\Delta t^{(n)}}\left\Vert\hat{S}^{m}\left(t^{(n)}_j\right)-\tilde{S}^{(n),m}\left(t^{(n)}_j\right)\right\Vert_E,
\end{flalign*}
which converges to zero by the induction hypothesis.

Therefore, we proved that uniformly in $t\in[0,T]$ the term
\[\Delta t^{(n)}\sum_{j=0}^{\lfloor t/\Delta t^{(n)}\rfloor}\left(T_+^{(n)}\right)^{\sum_{i=j}^{\lfloor t/\Delta t^{(n)}\rfloor}p^{(n),B-A}\left[\tilde{S}^{(n),m}_i\right]}\left(f^{(n)}\left[\tilde{S}^{(n),m}_j\right]\right)\]
converges in $L^2$ towards
\[\int_0^t f\left[\hat{S}^{m}(s)\right]\left(\cdot+\int_s^tp^{B-A}\left[\hat{S}^{m}(u)\right]du\right)ds.\]

\end{proof}

\subs{Proof of Theorem \ref{limit}}

Finally, let us put the partial convergence results proven in the previous subsections together to prove the convergence of the discrete limit order book models $S^{(n)}$ to $\hat{S}$. For this fix $\eps>0$. Then there exists by Theorem \ref{T3} an $M_1=M_1(\eps)$ such that for all $m\geq M_1$,
\[\left\Vert\hat{S}^m-\hat{S}\right\Vert_E<\frac{\eps}{6}.\]
Also, by Theorem \ref{T2} there exist $M_2=M_2(\eps)$ and $N_1=N_1(\eps)$ such that for all $m\geq M_2$ and $n\geq N_1$,
\[\left\Vert\tilde{S}^{(n)}-\tilde{S}^{(n),m}\right\Vert_E<\frac{\eps}{6}.\]
We set $M_0=M_0(\eps):=M_1(\eps)\vee M_2(\eps)$. Then for all $n\geq N_1$,
\[\left\Vert\hat{S}^{M_0}-\hat{S}\right\Vert_E+\left\Vert\hat{S}^{(n)}-\hat{S}^{(n),M_0}\right\Vert_E<\frac{\eps}{3}.\]
Furthermore, Theorem \ref{T4} yields the existence of an $N_2=N_2(M_0,\eps)=N_2(M_0(\eps),\eps)=N_2(\eps)$ such that for all $n\geq N_2$,
\[\left\Vert\tilde{S}^{(n),M_0}-\hat{S}^{M_0}\right\Vert_E<\frac{\eps}{6}.\]
Hence, for all $n\geq N_0=N_0(\eps):=N_1(\eps)\vee N_2(\eps)$,
\[\left\Vert\tilde{S}^{(n)}-\hat{S}\right\Vert_E\leq\left\Vert\tilde{S}^{(n)}-\tilde{S}^{(n),M_0}\right\Vert_E+\left\Vert\tilde{S}^{(n),M_0}-\hat{S}^{M_0}\right\Vert_E+\left\Vert\hat{S}^{M_0}-\hat{S}\right\Vert_E<\frac{\eps}{2}.\]
Finally, by Theorem \ref{T1} there exists for every $\delta \in(0,1)$ an $N_3=N_3(\eps,\delta)$ such that for all $n\geq N_3$,
\[\p\left(\left\Vert S^{(n)}-\tilde{S}^{(n)}\right\Vert_E\leq \frac{\eps}{2}\right)>1-\delta.\]
Setting $N=N(\eps,\delta):=N_0(\eps)\vee N_3(\eps,\delta)$ we conclude that for all $n\geq N$,
\[\p\left(\left\Vert S^{(n)}-\hat{S}\right\Vert_E\leq \eps\right)>1-\delta\quad\LRA\quad\p\left(\left\Vert S^{(n)}-\hat{S}\right\Vert_E> \eps\right)<\delta.\]
As $\eps,\delta>0$ were arbitrary, this completes the convergence proof.

 \hfill $\Box$

\s{The two-sided limit order book model}\label{2}

For the ease of notation we have concentrated on the one-sided LOB model in the previous sections. It is however straight forward to generalize the proof of Theorem \ref{limit} from a one-sided to a two-sided LOB model, in which both sides are modeled in a similar manner. In what follows, we briefly formulate the assumptions and dynamics underlying the two-sided model and state the corresponding convergence result. The proof will be omitted as it is works in exactly the same way as the proof of the one-sided model discussed in detail in the previous sections. To illustrate the usefulness of the general state dependency, Subsection \ref{sim} contains some simulation results of the two-sided LOB model.

\subs{Setup and convergence result}

In this section the state of the order book in the $n$-th model is represented by 
\[S^{(n)}(t):=\left(A^{(n)}(t),v^{(n)}_a(t),B^{(n)}(t),v^{(n)}_b(t)\right),\quad t\in[0,T],\]
where $A^{(n)}(t)$ describes the best ask price, $B^{(n)}(t)$ describes the best bid price, $v^{(n)}_a(t,\cdot)$ represents the sell side volume density function and $v^{(n)}_b(t,\cdot)$ the buy side volume density function (both in relative coordinates) at time $t$. Here, the volumes of the real order book correspond to $v_b^{(n)}(t,x),\ x\leq 0,$ and $v_a^{(n)}(t,x),\ x\geq0$, while the extension of $v_b^{(n)}(t)$ to the positive halfline and the extension of $v_a^{(n)}(t)$ to the negative halfline will be understood as the respective shadow book of each side and are used to model the distribution of order placements inside the spread. 

The event variables $\phi_k^{(n)}$ will now take their values in the set $\{A,B,C,D,E,F\}$ with
\bi
\item A = market sell order
\item B = buy limit order inside the spread
\item C = placement or cancelation of a buy limit order
\item D = market buy order
\item E = sell limit order inside the spread
\item F = placement or cancelation of a sell limit order.
\ei

Here the buy side events $A,B,C$ are defined in the same way as before, while the sell side events $D,E,F$ are symmetric copies of the respective buy side events. 
Hence, the dynamics of the two sided order book can formally be written down as
\[S^{(n)}(t)=S^{(n)}_k:=\left(A^{(n)}_k,v^{(n)}_{a,k},B^{(n)}_k,v^{(n)}_{b,k}\right),\quad t\in\left[t_k^{(n)},t^{(n)}_{k+1}\right),\]
where for $k=1,\dots,\lfloor T/\Delta t^{(n)}\rfloor$,
\begin{eqnarray*}
A_k^{(n)}&=&A_{k-1}^{(n)}+\1_{k-1}^{(n),D-E}\Delta x^{(n)},\\
B_k^{(n)}&=&B_{k-1}^{(n)}+\1_{k-1}^{(n),B-A}\Delta x^{(n)},\\
v_{a,k}^{(n)}&=&v_{a,k-1}^{(n)}+\1^{(n),D}_{k-1}\left(T^{(n)}_+-I\right)\left(v^{(n)}_{a,k-1}\right)+\1^{(n),E}_{k-1}\left(T^{(n)}_--I\right)\left(v^{(n)}_{a,k-1}\right)\\
&&\qquad\qquad+\1^{(n),F}_{k-1}\frac{\Delta v^{(n)}}{\Delta x^{(n)}}\omega_{k-1}^{(n)}\sum_j\1_{\left\{\pi_{k-1}^{(n)}\in\left[x_j^{(n)},x^{(n)}_{j+1}\right)\right\}}(\cdot),\\
v_{b,k}^{(n)}&=&v_{b,k-1}^{(n)}+\1^{(n),B}_{k-1}\left(T^{(n)}_+-I\right)\left(v^{(n)}_{b,k-1}\right)+\1^{(n),A}_{k-1}\left(T^{(n)}_--I\right)\left(v^{(n)}_{b,k-1}\right)\\
&&\qquad\qquad+\1^{(n),C}_{k-1}\frac{\Delta v^{(n)}}{\Delta x^{(n)}}\omega_{k-1}^{(n)}\sum_j\1_{\left\{\pi_{k-1}^{(n)}\in\left[x_j^{(n)},x^{(n)}_{j+1}\right)\right\}}(\cdot).
\end{eqnarray*}

Besides the scaling assumption \ref{scaling} we make the following assumption on the two-sided limit order book model in analogy to Assumptions \ref{initial}, \ref{density}, and \ref{Lipschitz} before.

\begin{ass}\label{ass2}\mbox{}
\begin{enumerate}
\item 
There exist $A_0,B_0\in\R_+$ such that $A^{(n)}_0\ra A_0$ and $B^{(n)}_0\ra B_0$. Moreover, 
the initial volume density functions $v^{(n)}_{a,0}$ and $v^{(n)}_{b,0}$ are non-negative step-functions on the grid \mbox{$\{x_j^{(n)},\ j\in\Z\}$,} which are uniformly bounded by $M$ and have compact support in $[-M,M]$ for all $n\in\N$. There exist non-negative continuously differentiable functions $v_{a,0},v_{b,0}\in L^2$ such that
\[\left\Vert v_{i,0}^{(n)}-v_{i,0}\right\Vert_{L^2}=\mathcal{O}\left(\Delta x^{(n)}\right),\quad i=a,b.\]
\item The random variables $\left(\omega_k^{(n)}\right)_{k,n\in\N_0}$ and $\left(\pi_k^{(n)}\right)_{k,n\in\N_0}$ take their values in the compact interval $[-M,M]$ almost surely.
\item There are Lipschitz continuous functions $p^A,p^B,p^D,p^E:E\times E\ra[0,1]$ with Lipschitz constant $L$ and a scaling parameter $\Delta p^{(n)}$ such that for all $n\in\N_0$ and $k\leq \lfloor T/\Delta t^{(n)}\rfloor$,
\begin{eqnarray*}
\p\left(\left.\phi^{(n)}_k=I \ \right|S_j^{(n)},\ j\leq k\right)=\Delta p^{(n)} p^I\left[S_k^{(n)}\right] \quad a.s.\quad\text{for}\ I=A,B,D,E.
\end{eqnarray*}
\item There are Lipschitz continuous functions $f_C^{(n)},f^{(n)}_F:E\times E\ra L^2,\ n\in\N_0,$ with common Lipschitz constant $L>0$ such that for all $k\leq \lfloor T/\Delta t^{(n)}\rfloor$ and $I=C,F$,
\[
	f^{(n)}_I\left[S_k^{(n)}\right](\cdot)=\frac{1}{\Delta x^{(n)}}\E\left(\left. \omega_k^{(n)}\sum_{j\in\Z} \1_{\left\{\pi_k^{(n)}\in\left[x_j^{(n)},x^{(n)}_{j+1}\right)\right\}}(\cdot)\1_I\left(\phi_k^{(n)}\right)\right|S_j^{(n)},\ j\leq k\right)\quad a.s.
	\]
and
\[\sup_{s\in E}\left\Vert f_I^{(n)}[s](\cdot)\right\Vert_\infty\leq M.\]
Moreover, there exist two functions $f_C,f_F:E\times E\ra L^2$ such that for $I=C,F$,
\[\sup_{s\in E}\left\Vert f^{(n)}_I[s]-f_I[s]\right\Vert_{L^2}=\mathcal{O}\left(\Delta x^{(n)}\right),\]
where each $f_I[s](\cdot):\R\ra[-M,M]$ is continuously differentiable in $x$ for all $s\in E'\times E'$ with derivate being uniformly bounded in absolute value by $M$.
\end{enumerate}
\end{ass}

In the following the norm on $E\times E$ is defined as $\left\Vert (s_1,s_2)\right\Vert_{E\times E}:=\left\Vert s_1\right\Vert_E+\left\Vert s_2\right\Vert_E$ for all $s_1,s_2\in E$. Moreover, we set $s_0:=\left(A_0,v_{a,0},B_0,v_{b,0}\right)\in E\times E$.

\begin{thm}\label{limit2}
Under Assumptions \ref{ass2} and \ref{scaling} there exists a deterministic process $S:[0,T]\ra E\times E$ such that for all $\eps>0$,
\[\lim_{n\ra\infty}\p\left(\sup_{0\leq t\leq T}\left\Vert S^{(n)}(t)-S(t)\right\Vert_{E\times E}>\eps\right)=0. \]
Moreover, $S=(A,v_a,B,v_b)$ is the unique classical solution to the following coupled ODE/PDE initial boundary value problem:
\begin{equation*}
\begin{aligned}
S(0)&=s_0,\\
dA(t)&=p^{D-E}[S(t)]dt,\quad t\in[0,T],\\
dB(t)&=p^{B-A}[S(t)]dt,\quad t\in[0,T],\\
\partial_tv_a(t,x)&=p^{D-E}[S(t)]\partial_x v_a(t,x)+f_F[S(t)](x),\quad (t,x)\in[0,T]\times\R,\\
\partial _t v_b(t,x)&=p^{B-A}[S(t)]\partial_xv_b(t,x)+f_C[S(t)](x),\quad (t,x)\in[0,T]\times\R.
\end{aligned}
\end{equation*}
\end{thm}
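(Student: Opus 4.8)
The plan is to mirror, component by component, the three-stage scheme developed for the one-sided book in Sections~\ref{limitmodel} and~\ref{proof}, replacing the state space $E$ by $E\times E$ throughout and tracking the four components $(A,v_a,B,v_b)$ and six event types in place of two and three. The key structural observation is that, since the product norm is the sum $\|(s_1,s_2)\|_{E\times E}=\|s_1\|_E+\|s_2\|_E$ and every coefficient $p^A,p^B,p^D,p^E,f_C,f_F$ is Lipschitz with respect to this full norm by Assumption~\ref{ass2}, the coupling between the two sides enters only through the coefficients. It is therefore absorbed into exactly the same Gronwall-type inequalities as before, and no estimate needs to be genuinely reworked.

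First I would establish existence and uniqueness of the limiting system as in Section~\ref{limitmodel}. One defines the fixed-point map $F$ on the Banach space of functions $g:[0,T]\to E\times E$ bounded by $\ol K$, acting componentwise: the price components are the integrals driven by $p^{B-A}$ and $p^{D-E}$, while the two volume components are the transport representations $v_{i,0}(\cdot+\int_0^t\!\cdots)+\int_0^t f_I[g(s)](\cdot+\int_s^t\!\cdots)\,ds$ for the pairs $(i,I)=(b,C)$ and $(a,F)$. The Lipschitz estimates computed for $F_{B,t}$ and $F_{v,t}$ carry over verbatim to all four components, so with the weighted norm $\|g\|_*=\sup_t e^{-\alpha t}\|g(t)\|_{E\times E}$ and $\alpha$ large enough, $F$ is a $\tfrac12$-contraction; Banach's theorem together with the analogue of Lemma~\ref{dif} yields a unique classical solution $\hat S=(\hat A,\hat v_a,\hat B,\hat v_b)$, and the continuous Gronwall argument of the uniqueness subsection applies once to $\|\hat S(t)-S(t)\|_{E\times E}$.

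Next I would reproduce the convergence chain of Theorems~\ref{T0}--\ref{T4}. The random and deterministic approximations $\ol S^{(n)},\tilde S^{(n)}$ are defined componentwise, with the ask volume shifted by $\sum\1^{(n),D-E}$ and built from $f_F^{(n)}$, and the bid volume shifted by $\sum\1^{(n),B-A}$ and built from $f_C^{(n)}$. In the analogue of Theorem~\ref{T0} one runs the martingale-difference WLLN (Theorem~\ref{WLLN}) four times: the two price arrays satisfy the identical bound $\E|Y_j^{(n)}|^2\le 4C(\Delta t^{(n)})^{1+\beta}$, since the active-event intensities $p^{(n),D}+p^{(n),E}$ and $p^{(n),A}+p^{(n),B}$ are each $O(\Delta p^{(n)})$, while the two volume arrays built from $M_j^{(n),C}-f_C^{(n)}$ and $M_j^{(n),F}-f_F^{(n)}$ satisfy the same $4M^2(\Delta v^{(n)})^2/\Delta x^{(n)}$ bound, giving $\tilde\alpha=(2-\beta)/2>\tfrac12$. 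Lemma~\ref{counting} and Remark~\ref{shiftf} control the shift-exponent discrepancies in each volume coordinate exactly as before, the discrete Gronwall lemma then delivers the analogue of Theorem~\ref{T1} for $\|S^{(n)}_k-\tilde S^{(n)}_k\|_{E\times E}$, the contraction estimate for $F^{(n)}$ gives Theorem~\ref{T2}, and the induction-plus-Riemann-sum argument gives Theorem~\ref{T4} for each fixed $m$, since $p^{B-A},p^{D-E}$ and $\hat S^m$ remain Lipschitz in time by the analogue of Lemma~\ref{Lip}. The final $\eps/6$ triangle-inequality assembly from the proof of Theorem~\ref{limit} is then literally unchanged.

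The only point that requires genuine care is the bookkeeping of the four martingale-difference arrays and the correct pairing of each shift operator with its placement function, so that the cross-side dependence of the coefficients never produces a term the product-norm Gronwall inequality cannot absorb. Once one checks that the second-moment bounds are uniform across all four arrays — which they are, because all intensities are $O(\Delta p^{(n)})$ and all order sizes are bounded by $M$ — the entire argument closes verbatim, which is precisely why the detailed proof may be omitted.
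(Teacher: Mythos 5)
Your proposal matches the paper's intent exactly: the paper omits the proof of Theorem \ref{limit2} precisely because it is the verbatim componentwise extension of the one-sided argument (Sections \ref{limitmodel} and \ref{proof}) to $E\times E$, which is what you carry out. The points you flag as requiring care --- the four martingale-difference arrays, the pairing of shift exponents with $f_C^{(n)}$ and $f_F^{(n)}$, and the absorption of the cross-side coupling into the product-norm Gronwall estimates --- are indeed the only bookkeeping issues, and your treatment of them is correct.
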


In order to guarantee that the bid and ask price, the spread, and the volume density functions are non-negative, the following assumption has to be satisfied.

\begin{ass}\label{pos}
\begin{eqnarray*}
p^B(s)=0,\quad p^D(s)&=&0\quad\forall\ s\in\left\{(A,v_a,B,v_b)\in E\times E:\ A=B\right\},\\
p^A(s)&=&0\quad\forall\ s\in\left\{(A,v_a,B,v_b)\in E\times E:\ B=0\right\},\\
\p\left(\omega_k^{(n)}<v_{b,k}^{(n)}\left(\pi_k^{(n)}\right),\phi_k^{(n)}=C\right)&=&0\quad \forall\ n\in\N_0,\ k=0,1,\dots,\lfloor T/\Delta t^{(n)}\rfloor-1,\\
\p\left(\omega_k^{(n)}<v_{a,k}^{(n)}\left(\pi_k^{(n)}\right),\phi_k^{(n)}=F\right)&=&0\quad \forall\ n\in\N_0,\ k=0,1,\dots, \lfloor T/\Delta t^{(n)}\rfloor-1.
\end{eqnarray*}
\end{ass}

\begin{lem}
If in addition Assumption \ref{pos} is satisfied in the statement of Theorem \ref{limit2}, then both volume density functions, both prices, and the spread between the best ask price and the best bid price are non-negative.
\end{lem}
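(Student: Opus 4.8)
The plan is to verify each of the four non-negativity claims for the limiting solution $S=(A,v_a,B,v_b)$, which by Theorem~\ref{limit2} is the unique classical solution of the coupled ODE/PDE system and, by the analogue of Lemma~\ref{dif}, has $A,B\in C^1([0,T])$ and $v_a(t,\cdot),v_b(t,\cdot)$ continuously differentiable with uniformly bounded, compactly supported $x$-derivatives. Throughout I write $x^-:=\max(-x,0)$. The prices and the spread I would treat directly on the limit ODEs, whereas the densities I would handle in the prelimit and pass to the limit, since the relevant part of Assumption~\ref{pos} is phrased probabilistically.

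For the bid price, $B'(t)=p^{B-A}[S(t)]=p^B[S(t)]-p^A[S(t)]$, and by Assumption~\ref{pos} $p^A$ vanishes at every state whose bid coordinate equals $0$. Comparing $S(t)$ with the state obtained by setting its bid coordinate to $0$ (which differs from $S(t)$ only in that coordinate, at $E\times E$-distance $|B(t)|$), Lipschitz continuity gives $p^A[S(t)]\le L\,|B(t)|$, so $B'(t)\ge -p^A[S(t)]\ge -L\,B(t)^-$. Since $B(0)=B_0\ge 0$, a first-passage argument together with Gronwall's lemma (Lemma~\ref{Gronwall2}) forces $B\ge 0$ on $[0,T]$: if $\tau:=\inf\{t:B(t)<0\}<T$ then $B(\tau)=0$ and $B(t)\ge B(\tau)e^{-L(t-\tau)}=0$ on a right neighbourhood of $\tau$, a contradiction. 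The same scheme applies to the spread $D:=A-B$, whose derivative is $p^{D-E}[S(t)]-p^{B-A}[S(t)]$: at every state with $A=B$ the two events capable of narrowing the spread (a bid increase and an ask decrease) have vanishing intensity by Assumption~\ref{pos}, so by the Lipschitz estimate $D'(t)\ge -L\,D(t)^-$, and $D(0)=A_0-B_0\ge 0$ yields $A(t)\ge B(t)$ on $[0,T]$; in particular $A(t)\ge B(t)\ge 0$.

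For the densities, fix $n$ and prove $v_{a,k}^{(n)}\ge 0$ and $v_{b,k}^{(n)}\ge 0$ for all $k$ by induction on $k$. The base case is Assumption~\ref{ass2}(1). For the inductive step, the price-shift contributions reduce to pure translations of the previous density (for instance $v_{b,k}^{(n)}=T_+^{(n)}(v_{b,k-1}^{(n)})$ on a type-$B$ event), which preserve non-negativity; a genuine placement only adds mass; and on a cancellation the last two conditions of Assumption~\ref{pos} guarantee almost surely that the amount removed at the single affected tick never exceeds the standing volume there, while all other ticks are untouched, so the updated density stays non-negative. Having established $v_{i,k}^{(n)}\ge 0$ for $i=a,b$, I invoke Theorem~\ref{limit2}: for each fixed $t$, $v_i^{(n)}(t)\to v_i(t)$ in $L^2$ in probability. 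Since the cone $K:=\{f\in L^2:\ f\ge 0\text{ a.e.}\}$ is closed and the limit is deterministic, $v_i^{(n)}(t)\in K$ a.s.\ forces $v_i(t)\in K$ (otherwise $d(v_i(t),K)>0$ would bound $\|v_i^{(n)}(t)-v_i(t)\|_{L^2}$ away from $0$ almost surely, contradicting convergence in probability), i.e.\ $v_a(t,\cdot)\ge 0$ and $v_b(t,\cdot)\ge 0$ a.e.

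The main obstacle is the density part. In the induction one must use the probabilistic cancellation caps of Assumption~\ref{pos} carefully, checking that the a.s.\ bound on $\omega_k^{(n)}$ on $\{\phi_k^{(n)}=C\}$ (resp.\ $\{\phi_k^{(n)}=F\}$) is exactly what keeps the one affected tick non-negative; and one must ensure that pointwise (a.e.) non-negativity genuinely survives the $L^2$-limit, which is where closedness of $K$ and the deterministic nature of the limit are essential. By contrast, an all-continuous treatment of the densities would first require extracting from Assumption~\ref{pos} the pointwise sign property $f_C[s](x)\ge 0$ (and $f_F[s](x)\ge 0$) at states whose density vanishes at $x$, which is less direct; this is why I prefer the prelimit route for the densities while handling the prices on the limit ODEs.
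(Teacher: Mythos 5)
The paper states this lemma without any proof, so there is nothing to compare against line by line; your argument is the natural one and, in my view, essentially complete. The two-track structure --- a first-passage/Gronwall comparison on the limiting ODEs for $B$ and for the spread $A-B$, exploiting the Lipschitz continuity of the intensities together with their vanishing on the critical boundary sets $\{B=0\}$ and $\{A=B\}$, versus a pathwise induction in the prelimit for the densities followed by passage to the limit using closedness of the cone $\{f\ge 0\}$ in $L^2$ and the fact that the limit is deterministic --- is sound, and the prelimit route for the densities is indeed the right choice, since Assumption \ref{pos} constrains the cancellation sizes pathwise rather than through a sign property of $f_C,f_F$.

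Three caveats, all of which concern the interface with Assumption \ref{pos} rather than your reasoning. First, your spread argument needs the two spread-narrowing intensities, $p^B$ and $p^E$, to vanish at $A=B$; the assumption as printed sets $p^B$ and $p^D$ to zero there, and a $D$-event (market buy order) moves the ask \emph{up} and cannot threaten the spread, so you are tacitly reading $p^D$ as $p^E$ --- almost certainly the intended statement, but it should be said explicitly, since with the literal assumption the spread claim does not follow. Second, the printed cancellation condition $\p\left(\omega_k^{(n)}<v^{(n)}_{b,k}\left(\pi_k^{(n)}\right),\phi_k^{(n)}=C\right)=0$ needs a minus sign, i.e.\ $\omega_k^{(n)}\ge -v^{(n)}_{b,k}\left(\pi_k^{(n)}\right)$ a.s.\ on $\{\phi_k^{(n)}=C\}$; with that reading, and using that $\Delta v^{(n)}/\Delta x^{(n)}\le 1$ for large $n$ and that $v^{(n)}_{b,k}$ is constant on the single affected grid cell, the added increment $\frac{\Delta v^{(n)}}{\Delta x^{(n)}}\omega_k^{(n)}$ is indeed bounded below by $-v^{(n)}_{b,k}\left(\pi_k^{(n)}\right)$ and your induction closes. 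Third, two small imprecisions: the displayed inequality $B'(t)\ge -L\,B(t)^-$ is false where $B(t)>0$ (there $p^A[S(t)]\le L|B(t)|$ carries no sign information), but it does hold on $\{B\le 0\}$, which is all your first-passage argument uses; and $D(0)=A_0-B_0\ge 0$ is used but is not among the stated hypotheses, though it is clearly intended. With these readings of Assumption \ref{pos} your proof is correct and fills a gap the paper leaves open.
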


\subs{Simulations}\label{sim}

This subsection contains a simulation study of the two-sided limit order book model introduced in the previous subsection. It provides an example of the usefulness of the general dependence structure that is allowed in our model, but was not covered in the previous work \cite{HP}.

For this let us fix $h>0$. We define the imbalance factor in the $n$-th model at time $t_k^{(n)}$ via
\[Im_k^{(n)}:=\frac{VolBid^{(n)}_k}{VolAsk^{(n)}_k+VolBid^{(n)}_k}\]
with 
\[VolAsk_k^{(n)}:=\int_{0}^{h}v_{a,k}^{(n)}(x)dx,\qquad VolBid_k^{(n)}:=\int_{-h}^{0}v_{b,k}^{(n)}(x)dx.\]
Moreover, in the $n$-th model the spread at time $t_k^{(n)}$ is defined as
\[Sp_k^{(n)}:=A_k^{(n)}-B^{(n)}_k.\]

There is strong empirical evidence, cf.~e.g.~\cite{Cao} and \cite{YangZhu}, that the probability whether the next price change is upwards or downwards depends on the imbalance of the bid and ask queues at the top of the book: if the imbalance is high, i.e.~the standing volume at the top of the buy side is significantly higher than the standing volumes at the top of the sell side, then prices are more likely to move upwards. And conversely, an imbalance factor which is close to zero will increase the probability of the price moving downwards. Thus, we choose
\begin{eqnarray*}
\p\left(\left.\phi_k^{(n)}=A\ \right|S^{(n)}_j,\ j\leq k\right)&=&\Delta p^{(n)}\left(1-Im_{k}^{(n)}\right)\cdot\exp\left(-Sp_{k}^{(n)}\right),\\
\p\left(\left.\phi_k^{(n)}=D\ \right|S^{(n)}_j,\ j\leq k\right)&=&\Delta p^{(n)}Im_{k}^{(n)}\cdot\exp\left(-Sp_{k}^{(n)}\right),\\
\p\left(\left.\phi_k^{(n)}=B\ \right|S^{(n)}_j,\ j\leq k\right)&=&\Delta p^{(n)}Im_{k}^{(n)}\cdot\left(1-\exp\left(Sp_k^{(n)}\right)\right), \\
\p\left(\left.\phi_k^{(n)}=E\ \right|S^{(n)}_j,\ j\leq k\right)&=&\Delta p^{(n)}\left(1-Im_k^{(n)}\right)\cdot\left(1-\exp\left(-Sp_{k}^{(n)}\right)\right).
\end{eqnarray*}
Moreover, for simplicity we suppose that placements outside the spread are always of size $50$ (this assumption can easily be relaxed) and are located relatively to the best ask resp.~bid price according to a Gaussian distribution, i.e.
\begin{eqnarray*}
\p\left(\left.\phi_k^{(n)}=C,\ \omega_k^{(n)}=50,\ \pi_k^{(n)}\in dy\ \right|S_{j}^{(n)},\ j\leq k\right)&=&\frac{1}{\sqrt[4]{32\pi^2}}\left(1-\Delta p^{(n)}\right)\left(1-Im_{k}^{(n)}\right)\exp\left(-y^2\right)dy,\\
\p\left(\left.\phi_k^{(n)}=F,\ \omega_k^{(n)}=50,\ \pi_k^{(n)}\in dy\ \right|S_{j}^{(n)},\ j\leq k\right)&=&\frac{1}{\sqrt[4]{32\pi^2}}\left(1-\Delta p^{(n)}\right)Im_{k}^{(n)}\cdot\exp\left(-y^2\right)dy.
\end{eqnarray*}
We note that if integrated over the whole real axis the above two terms sum up to $\left(1-\Delta p^{(n)}\right)/2$. 
Furthermore, cancelations are supposed to be proportional to the current volume. Hence, we choose for all $x\leq 0$,
\begin{align*}
\p\left(\left.\phi_k^{(n)}=C,\ \omega_k^{(n)}\in dx,\ \pi_k^{(n)}\in dy\ \right|S_{j}^{(n)},\ j\leq k\right)=\qquad\qquad\qquad\qquad\qquad\qquad\qquad\qquad\qquad\\
\frac{1}{v^{(n)}_{b,k}(y)}\1_{\left[-v_{b,k}^{(n)}(y),0\right]}(x)\frac{1}{\sqrt[4]{32\pi^2}}\left(1-\Delta p^{(n)}\right)Im_{k}^{(n)}\cdot\exp\left(-y^2\right)dxdy\\
\p\left(\left.\phi_k^{(n)}=F,\ \omega_k^{(n)}\in dx,\ \pi_k^{(n)}\in dy\ \right|S_{j}^{(n)},\ j\leq k\right)=\qquad\qquad\qquad\qquad\qquad\qquad\qquad\qquad\qquad\\
\frac{1}{v^{(n)}_{a,k}(y)}\1_{\left[-v_{a,k}^{(n)}(y),0\right]}(x)\frac{1}{\sqrt[4]{32\pi^2}}\left(1-\Delta p^{(n)}\right)\left(1-Im_{k}^{(n)}\right)\exp\left(-y^2\right)dxdy.
\end{align*}
If we integrate the above two terms in both variables and sum them, we get again $\left(1-\Delta p^{(n)}\right)/2$. 

For the simulation we choose the following parameter values: 
\begin{eqnarray*}
n=250,\qquad \alpha=0.8,\qquad h=0.55, \qquad T= 8.
\end{eqnarray*}

We start with a limit order book that has a severe imbalance at time $t=0$: standing volumes at the ask side are many times higher than standing volumes at the bid side. However, our specifications about the conditional distributions of incoming orders made above guarantee that the imbalance at the top of the book will disappear after some time.
While Figure \ref{3D} shows the evolution of the whole visible limit order book plotted at different points in time, Figure \ref{2D} only shows the evolution of the bid and ask price together with the cumulated volumes at the top of the book.

\begin{figure}[h]\label{book}
\includegraphics[width=0.8\textwidth]{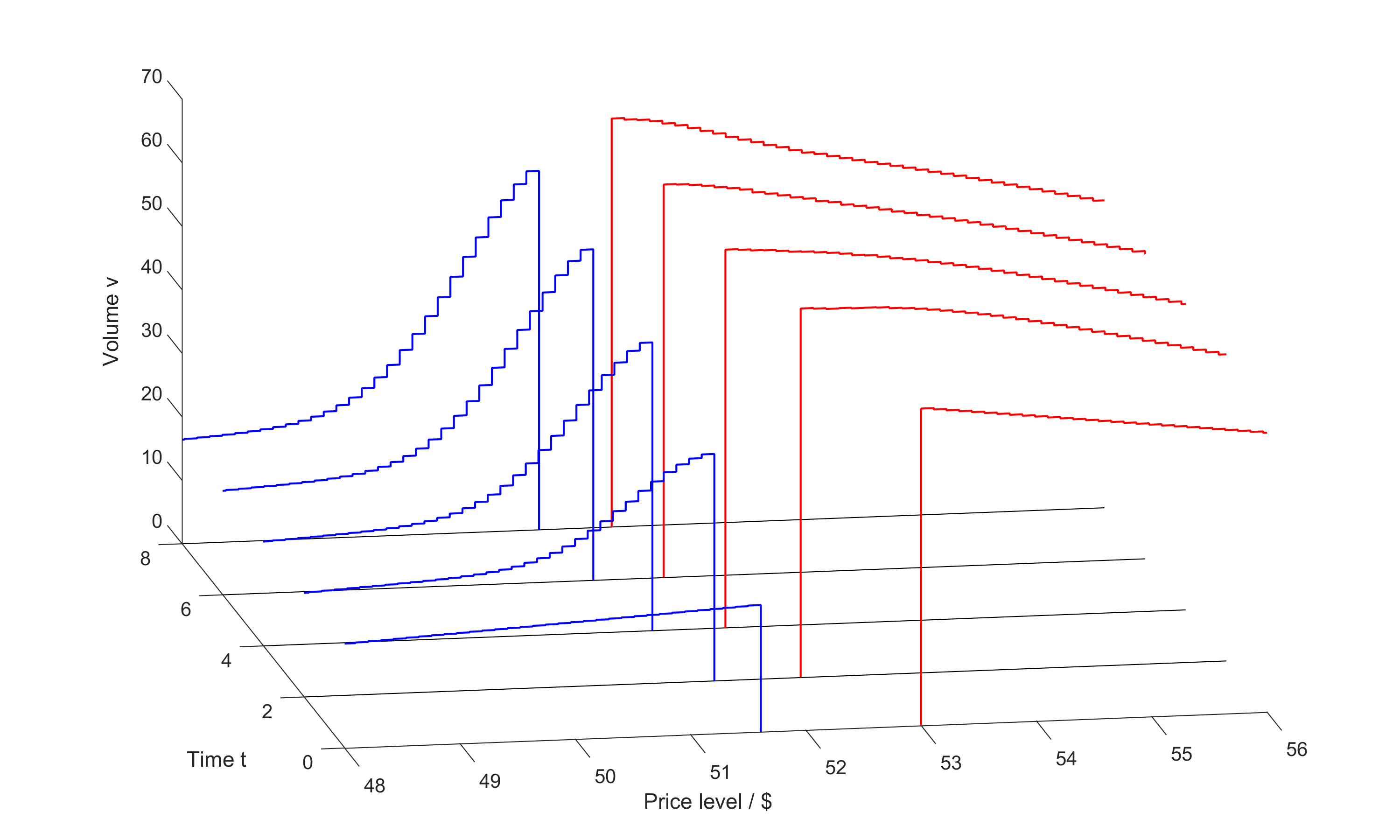}
\caption{The evolution of the limit order book volumes}
\label{3D}
\end{figure}

\begin{figure}[h]\label{graphs}
\includegraphics[width=0.8\textwidth]{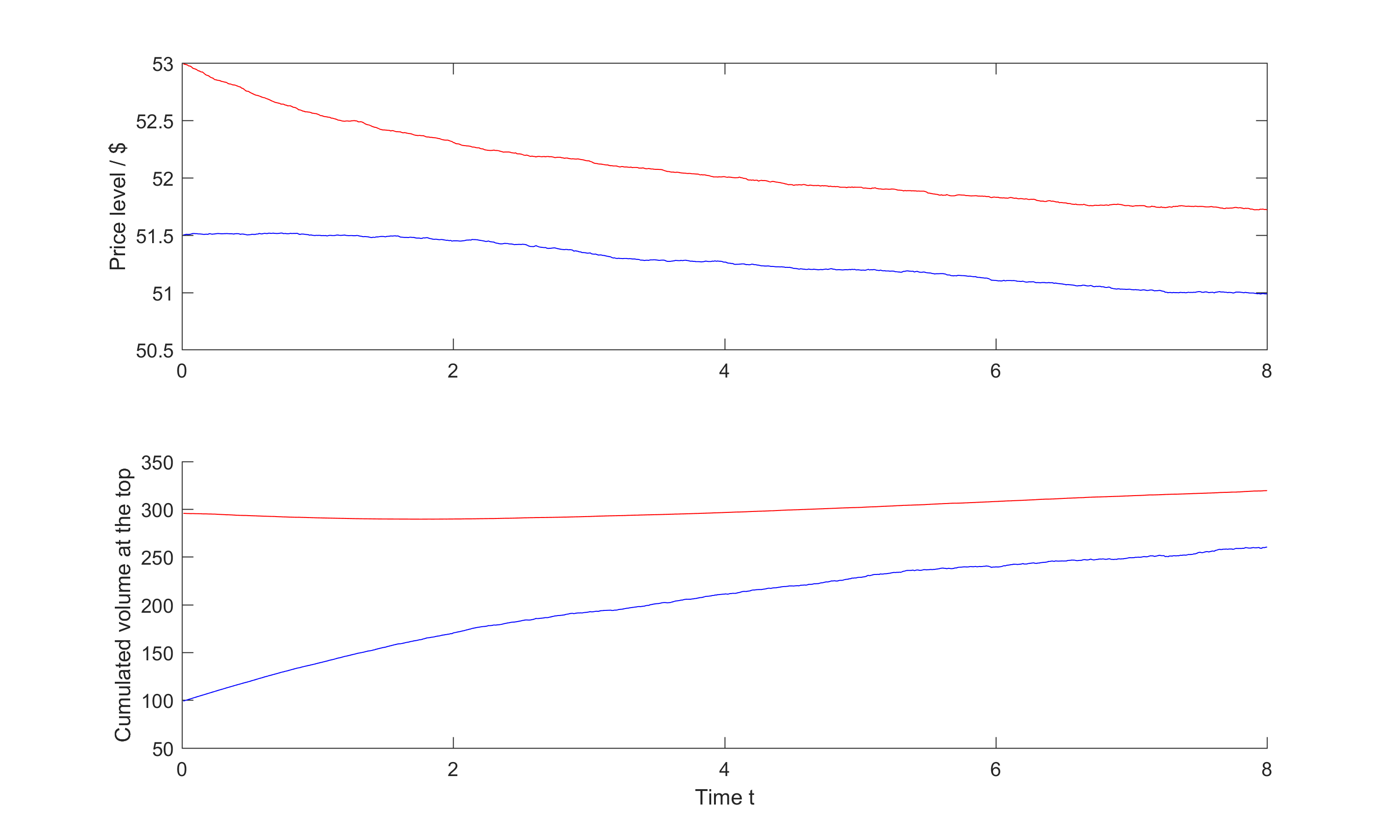}
\caption{Evolution of prices and of volumes at the top of the book}
\label{2D}
\end{figure}

The evolution of prices is influenced by the spread as well as the imbalance factor. As can be seen from the first graph in Figure \ref{2D}, first the spread is the dominating factor (due to the exponential) and forces the ask price to decrease faster than the bid price. However, after time $t\approx3$ both prices decrease simultaneously keeping the spread almost constant. This downward shift of the mid price is caused by the imbalance of standing volumes at the top of the book and therefore mimicks the findings in the literature very well, cf.~e.g.~\cite{Cao}. Also the second graph in Figure \ref{2D} shows that cumulated volumes at the top of both sides of the limit order book converge. In this particular simulation study the buy side volume approaches the sell side volume because we have chosen the size of order placements much greater than the size of average cancelations for the initial volume density functions. If placements were supposed to be of a much smaller size, for example 10 instead of 50, the opposite effect could be observed, i.e.~the sell side volumes at the top of the book would decrease to approach the buy side volumes at the top of the book. Of course, much more general random and even state dependent choices of the order sizes are possible and would lead to even more interesting dynamics. 

Last but not least, Figure \ref{3D} shows that the discrete order book dynamics can indeed be well approximated by a smooth function and is hence supportive of Theorem \ref{limit2}.

\appendix

\s{}

\subs{A weak law of large numbers for triangular martingale difference arrays}

The following weak law of large numbers for triangular martingale difference arrays relies on moment estimates from \cite{Pisier} and is shown in \cite{HP}. 

\begin{thm} \label{WLLN}
Let $\left(y_k^n, k=1,\dots, n; n\in\N\right)$ be a triangular martingale difference array taking values in a real separable Hilbert space such that
\[\sup_{\substack{k\leq n\\n\in\N}}\left(n^{2\alpha}\E|y_k^n|^2\right)<\infty\]
for some $\alpha>\frac{1}{2}$. Then for all $\eps>0$,
\[\lim_{n\ra\infty}\p\left(\sup_{m\leq n}\left|\sum_{k=1}^m y_k^n\right|>\eps\right)=0.\]
\end{thm}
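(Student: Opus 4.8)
The plan is to treat, for each fixed $n$, the partial sums $M_m^{n}:=\sum_{k=1}^m y_k^n$ as a martingale (with respect to the underlying filtration $(\F_m^n)_m$ for which $(y_k^n)$ is a difference array) taking values in the separable Hilbert space, and to control the maximal function $\sup_{m\leq n}|M_m^n|$ via its second moment. The entire argument reduces to showing that $\E|M_n^n|^2$ is of order $n^{1-2\alpha}$, which tends to zero precisely because $\alpha>\tfrac12$; convergence in probability of the supremum then follows from a maximal inequality together with Chebyshev's inequality.

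First I would establish orthogonality of the increments: for $j<k$, conditioning on $\F_{k-1}^n$ and using $\E[y_k^n\mid\F_{k-1}^n]=0$ gives $\E\langle y_j^n,y_k^n\rangle=0$ in the Hilbert space inner product. Consequently the infinite-dimensional Pythagoras identity yields $\E|M_n^n|^2=\sum_{k=1}^n\E|y_k^n|^2$. The moment hypothesis $\sup_{k\leq n,\,n\in\N}n^{2\alpha}\E|y_k^n|^2\le C$ then bounds this sum by $n\cdot C n^{-2\alpha}=C n^{1-2\alpha}$, so $\E|M_n^n|^2\le C n^{1-2\alpha}\to0$.

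Next I would invoke Doob's maximal inequality applied to the nonnegative submartingale $(|M_m^n|^2)_{m\leq n}$; it is a submartingale because $|\cdot|^2$ is convex and conditional Jensen holds for Hilbert-space valued integrable random variables, giving $\E[|M_m^n|^2\mid\F_{m-1}^n]\ge|M_{m-1}^n|^2$. Doob's weak-type inequality then yields directly
\[
\p\left(\sup_{m\leq n}|M_m^n|>\eps\right)=\p\left(\sup_{m\leq n}|M_m^n|^2>\eps^2\right)\le\frac{\E|M_n^n|^2}{\eps^2}\le\frac{C n^{1-2\alpha}}{\eps^2},
\]
which tends to zero as $n\to\infty$ since $1-2\alpha<0$, completing the proof.

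The main obstacle I anticipate is purely technical and infinite-dimensional: verifying that both the orthogonality identity and Doob's maximal inequality carry over verbatim from the scalar to the separable-Hilbert-space setting, including the integrability and measurability needed to condition and to identify $\E|M_n^n|^2$ with $\sum_k\E|y_k^n|^2$. One clean route is to fix an orthonormal basis, argue coordinatewise, and recombine using monotone convergence; alternatively one cites the vector-valued martingale moment inequalities (a Hilbert space has martingale type $2$), which is exactly the role played by the reference to Pisier's estimates. Apart from this verification, the computation is entirely routine.
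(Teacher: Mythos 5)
Your proposal is correct, and it is essentially self-contained, whereas the paper gives no proof at all for this statement: it simply cites \cite{HP}, remarking that the argument there ``relies on moment estimates from \cite{Pisier}.'' The two routes are worth contrasting. Your argument exploits the fact that the state space is a genuine Hilbert space: martingale differences are orthogonal, so $\E\bigl|\sum_{k=1}^n y_k^n\bigr|^2=\sum_{k=1}^n\E|y_k^n|^2\le Cn^{1-2\alpha}$ exactly, and Doob's weak-type inequality for the nonnegative submartingale $\bigl(|M_m^n|^2\bigr)_m$ (submartingale by conditional Jensen for Bochner-integrable variables) finishes the proof. The Pisier-type estimates invoked in \cite{HP} are the substitute for this orthogonality in the general setting of $2$-smooth Banach spaces, where one only has $\E\sup_{m\le n}\bigl|\sum_{k\le m}y_k^n\bigr|^2\le C\sum_k\E|y_k^n|^2$ up to a constant; your specialization buys an elementary, constant-free argument at the (harmless) cost of generality, since every space appearing in this paper ($\R$, $L^2$, and $E=\R\times L^2(\R)$, which is separable Hilbert up to norm equivalence) is covered. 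The only points deserving explicit verification are the ones you already flag: integrability of $\langle y_j^n,y_k^n\rangle$ (Cauchy--Schwarz from the second-moment hypothesis), the identity $\E[\langle y_j^n,y_k^n\rangle\mid\F_{k-1}^n]=\langle y_j^n,\E[y_k^n\mid\F_{k-1}^n]\rangle$ for $j<k$, and the validity of conditional Jensen for the convex function $|\cdot|^2$; all are standard for separable Hilbert spaces. Your proof is complete and arguably cleaner than outsourcing to \cite{Pisier}.
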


\subs{Gronwall lemmas}

For reference we cite the discrete and continuous version of Gronwall's lemma which can for example be found in \cite{Elyadi}:

\begin{lem}\label{Gronwall}
Let $(y_m)_{m\geq0}$, $(f_m)_{m\geq0}$, and $(g_m)_{m\geq0}$ be nonnegative sequences. If
\[y_m\leq f_m+\sum_{k=0}^{m-1}g_ky_k\quad \forall\ m,\]
then
\[y_m\leq f_m+\sum_{k=0}^{m-1}f_kg_ke^{\sum_{j=k+1}^{m-1}g_j}.\]
\end{lem}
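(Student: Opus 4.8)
The plan is to reduce this discrete Gronwall inequality to a one-step linear recursion for the accumulated sum on the right-hand side, resolve that recursion by telescoping against an integrating factor, and finally convert products into exponentials via the elementary bound $1+x\le e^x$. Throughout, the nonnegativity of all three sequences is exactly what keeps the inequality manipulations legitimate, so I would flag each place where it is used.

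First I would introduce the partial sum $S_m:=\sum_{k=0}^{m-1}g_ky_k$ with the convention $S_0=0$, so that the hypothesis reads simply $y_m\le f_m+S_m$ for every $m$. Since it then suffices to bound $S_m$ and add back $f_m$ at the end, I would focus entirely on $S_m$. The key structural observation is that $S_{m+1}-S_m=g_my_m$; feeding in $y_m\le f_m+S_m$ and using $g_m\ge0$ yields the linear recursive estimate
\[S_{m+1}\le (1+g_m)S_m+g_mf_m.\]

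Next I would resolve this recursion. The cleanest route is to divide through by the product $\Pi_m:=\prod_{j=0}^{m-1}(1+g_j)$ (with $\Pi_0=1$): since $\Pi_{m+1}=(1+g_m)\Pi_m$ and each factor is $\ge1>0$, the division preserves the inequality and gives $S_{m+1}/\Pi_{m+1}\le S_m/\Pi_m+g_mf_m/\Pi_{m+1}$. Summing this telescoping relation from $0$ to $m-1$ and using $S_0=0$ produces
\[S_m\le\sum_{k=0}^{m-1}g_kf_k\,\frac{\Pi_m}{\Pi_{k+1}}=\sum_{k=0}^{m-1}g_kf_k\prod_{j=k+1}^{m-1}(1+g_j).\]
(The identical bound also follows by a short induction on $m$, using $(1+g_m)\le e^{g_m}$ in the inductive step, if one prefers to avoid the integrating-factor bookkeeping.) I would then apply $1+x\le e^x$ factor by factor to turn $\prod_{j=k+1}^{m-1}(1+g_j)$ into $e^{\sum_{j=k+1}^{m-1}g_j}$, and finally add $f_m$ through $y_m\le f_m+S_m$ to recover exactly the asserted inequality.

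I do not expect a genuine obstacle, as this is a classical discrete Gronwall estimate; the only points that demand care are bookkeeping ones. Specifically, I would check the empty-sum and empty-product conventions at the endpoints so that the base case $m=0$ and the top term $k=m-1$ (whose exponent $\sum_{j=m}^{m-1}g_j$ is empty, hence zero) come out correctly, and I would keep track of nonnegativity throughout: of $g_m$ to pass from $S_{m+1}-S_m=g_my_m$ to the recursion, and of the factors $1+g_j\ge1$ so that both dividing by $\Pi_m$ and bounding products by exponentials preserve the direction of the inequality.
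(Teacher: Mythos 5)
Your proof is correct. The paper does not actually prove Lemma \ref{Gronwall} --- it is stated for reference and cited from the literature (Elaydi's book on difference equations) --- so there is no in-paper argument to compare against; your integrating-factor/telescoping derivation of the recursion $S_{m+1}\le(1+g_m)S_m+g_mf_m$, followed by the bound $1+x\le e^x$, is the standard proof of the discrete Gronwall inequality, and your handling of the empty-sum/empty-product conventions and of the nonnegativity hypotheses is accurate.
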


\begin{lem}\label{Gronwall2}
Let $\alpha, \beta, u$ be real-valued functions defined on some interval $I = [a, b],\ a < b,\ a,b,\in\R$.  Assume that $\beta$ and $u$ are continuous and that the negative part of $\alpha$ is integrable on every compact subinterval of $I$.
If $\beta$ is non-negative and if $u$ satisfies the integral inequality
\[u(t) \leq \alpha(t) + \int_a^t \beta(s) u(s)ds\qquad \forall\ t\in I,\]
then
\[u(t) \leq \alpha(t) + \int_a^t\alpha(s)\beta(s)\exp\left(\int_s^t\beta(r)dr\right)ds,\qquad t\in I.\]
\end{lem}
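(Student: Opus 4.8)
The plan is to reduce the integral inequality to a first–order linear differential inequality, solve the latter with an integrating factor, and substitute back. First I would introduce the auxiliary function
\[
v(t):=\int_a^t\beta(s)u(s)\,ds,\qquad t\in I,
\]
so that $v(a)=0$. Since $\beta$ and $u$ are continuous, the integrand $\beta u$ is continuous, whence $v$ is continuously differentiable with $v'(t)=\beta(t)u(t)$. The hypothesis reads $u(t)\le\alpha(t)+v(t)$, and multiplying by the non-negative factor $\beta(t)$ preserves the inequality, giving $v'(t)=\beta(t)u(t)\le\alpha(t)\beta(t)+\beta(t)v(t)$, i.e.
\[
v'(t)-\beta(t)v(t)\le\alpha(t)\beta(t),\qquad t\in I.
\]

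Next I would introduce the integrating factor $\mu(t):=\exp\!\left(-\int_a^t\beta(r)\,dr\right)$, which is positive and continuously differentiable with $\mu'(t)=-\beta(t)\mu(t)$. Multiplying the preceding differential inequality by $\mu(t)>0$ turns the left-hand side into an exact derivative,
\[
\frac{d}{dt}\bigl(\mu(t)v(t)\bigr)=\mu(t)\bigl(v'(t)-\beta(t)v(t)\bigr)\le\mu(t)\alpha(t)\beta(t).
\]
Integrating from $a$ to $t$ and using $v(a)=0$ together with $\mu(a)=1$ yields $\mu(t)v(t)\le\int_a^t\mu(s)\alpha(s)\beta(s)\,ds$. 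Dividing by $\mu(t)$ and noting that $\mu(s)/\mu(t)=\exp\!\left(\int_s^t\beta(r)\,dr\right)$, I obtain
\[
v(t)\le\int_a^t\alpha(s)\beta(s)\exp\!\left(\int_s^t\beta(r)\,dr\right)ds.
\]
Combining this with $u(t)\le\alpha(t)+v(t)$ then gives exactly the asserted bound.

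The one point that requires care, and which I expect to be the only genuine subtlety, is the weak regularity imposed on $\alpha$: it is assumed neither continuous nor of definite sign, only that its negative part is integrable on compact subintervals. This hypothesis is precisely what guarantees that the integral $\int_a^t\mu(s)\alpha(s)\beta(s)\,ds$ is well defined in $(-\infty,+\infty]$ — since $\mu\beta$ is continuous and strictly positive, $\mu\beta\alpha^-$ is integrable on compacts, so the integral cannot take the indeterminate form $\infty-\infty$ — and hence that the concluding inequality remains meaningful even when its right-hand side equals $+\infty$. Crucially, $\alpha$ never enters the definition of $v$, so the differentiation step $v'=\beta u$ depends only on the continuity of $\beta$ and $u$ and is unaffected by the irregularity of $\alpha$; the integrating-factor manipulation and the final integration (of a pointwise inequality between a continuous derivative and a function with integrable negative part) then go through verbatim, with a routine measure-theoretic check that integration preserves the inequality covering the case where $\alpha$ is merely measurable.
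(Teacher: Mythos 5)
Your proof is correct, and it is the standard integrating-factor argument for the continuous Gronwall inequality: the paper itself gives no proof of Lemma \ref{Gronwall2}, citing it from the literature, and your argument is essentially the classical one found in the cited reference. Your handling of the weak hypothesis on $\alpha$ is also sound — since $\alpha$ enters only on the right-hand side of the pointwise inequality $\bigl(\mu v\bigr)'(s)\leq \mu(s)\alpha(s)\beta(s)$, whose left-hand side is continuous, integrability of $\alpha^-$ on compacts (which presupposes the measurability of $\alpha$ implicit in the statement) is exactly what is needed for $\int_a^t\mu(s)\alpha(s)\beta(s)\,ds$ to be well defined in $(-\infty,+\infty]$ and for the integrated inequality to hold, trivially so when the right-hand side is $+\infty$.
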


{\small

\bibliographystyle{abbrv}

\bibliography{LOBLit}
}

\end{document}